\documentclass{article}

\usepackage{arxiv}

\usepackage[utf8]{inputenc} 
\usepackage[T1]{fontenc}    
\usepackage{times}
\usepackage{bm}

\usepackage{hyperref}       
\usepackage{url}            
\usepackage{booktabs}       
\usepackage{amsfonts,amssymb,amsmath}
\usepackage{nicefrac}       
\usepackage{microtype}      
\usepackage{graphicx}
\usepackage{bbm}
\usepackage{natbib}
\usepackage{color}
\usepackage{multirow}
\usepackage{caption}
\usepackage{subcaption}
\usepackage{amsthm}

\newtheorem{theorem}{Theorem}
\newtheorem{assumption}{Assumption}
\newtheorem{corollary}{Corollary}

\newcommand{\GP}{\mathcal{GP}}
\newcommand{\mc}{\mathcal}
\newcommand{\mbb}{\mathbb}

\newcommand{\Matern}{{\text{Mat\'{e}rn}}}
\newcommand{\RKHS}[1]{\mathcal{H}_{#1}}

\title{Robust and Data-Adaptive Integration of Nonconcurrent Data in Platform Trials via Gaussian Processes}

\author{
 Yuhan Qian \\
  Department of Biostatistics\\
  University of Washington\\
  Seattle, Washington, USA \\
   \And
 Yu Du \\
  Global Statistical Sciences\\
  Eli Lilly and Company\\
  Indianapolis, Indiana, USA \\
   \And
 Jingning Zhang \\
  Global Statistical Sciences\\
  Eli Lilly and Company\\
  Indianapolis, Indiana, USA \\
   \And
 Yanyao Yi \\
  Global Statistical Sciences\\
  Eli Lilly and Company\\
  Indianapolis, Indiana, USA \\
    \And
  Patrick J. Heagerty \\
  Department of Biostatistics\\
  University of Washington\\
  Seattle, Washington, USA \\
      \And
  Ting Ye \\
  Department of Biostatistics\\
  University of Washington\\
  Seattle, Washington, USA \\
}

\begin{document}
\maketitle
\begin{abstract}
A platform trial is an innovative clinical trial design that enables simultaneous and continuous evaluation of multiple treatments within a single master protocol. Existing robust methods restrict analyses to concurrently randomized participants due to concerns that including nonconcurrent data may introduce bias from temporal trends. However, this exclusion represents a missed opportunity to improve efficiency. 
We propose a Gaussian process framework for incorporating nonconcurrent data that exploits \emph{temporal smoothness}, a key feature of platform trials. The framework includes single-task and multi-task formulations and provides data-adaptive integration of nonconcurrent data with uncertainty quantification. The connection to kernel ridge regression yields a transparent frequentist interpretation of how nonconcurrent data are integrated. 
We establish two theoretical guarantees: incorporating nonconcurrent controls reduces the posterior variance of the treatment effect, and the resulting bias is controlled by a non-increasing bound.
We extend the framework to discrete outcomes and to covariate adjustment, illustrate it on a hypothetical platform trial constructed from SURMOUNT-1, and provide an implementation in the \textsf{R} package \textsf{RobinCID}.
\end{abstract}

\keywords{Causal inference\and Kernel ridge regression\and Master protocol\and Nonconcurrent controls\and Temporal smoothness}

\section{Introduction}\label{sec: intro}
A platform trial is an innovative trial design that enables simultaneous and continuous evaluation of multiple treatments within a single master protocol \citep{woodcock2017master}. Unlike traditional trials, platform trials accommodate treatments with different eligibility criteria and allow for dynamic adaptations, such as adding or dropping treatment arms \citep{berry2015platform,saville2016efficiencies, gold2022platform, burki2023platform}. 
Due to their efficiency and flexibility, platform trials have gained prominence across disease areas.
Notable examples include the Adaptive COVID-19 Treatment Trial \citep{ACTT1} in infectious diseases and the Lung-MAP \citep{herbst2015lung} and I-SPY 2 \citep{barker2009spy} trials in oncology. However, the adaptability that makes platform trials attractive also introduces complexities for statistical inference, as discussed in recent draft guidance from the Food and Drug Administration \citep{fda:2023platform}.

A fundamental step toward valid inference in platform trials is the precise definition of the target estimand and its robust estimation. \cite{qian2025estimandsrobustinferencetreatment} proposed a clinically meaningful estimand based on the \emph{entire concurrently eligible} (ECE) trial population. For comparing two treatments, the ECE trial population is a population of all individuals who meet the eligibility criteria for both treatments and could potentially be enrolled during a time period when both treatments are available, and therefore could have been randomized to either treatment. The ECE trial population extends the standard entire trial population by imposing additional restrictions on concurrency and eligibility to preserve the integrity of randomized comparisons. It is defined entirely by pre-randomization characteristics and thus remains invariant to changes in randomization ratios. Robust estimation can be achieved using weighting or stratification without any modeling assumptions, but only on concurrent data.

While this concurrent-only framework ensures validity and aligns with FDA draft guidance \citep{fda:2023platform} and recommendations from researchers \citep{lee_including_2020, dodd2021platform}, it completely excludes \emph{nonconcurrent data}, namely information from participants enrolled before one of the treatments under comparison became available, typically from the control arm and hence termed nonconcurrent controls. Appropriately incorporating such data could improve the precision and power of treatment effect estimation and testing. However, nonconcurrent participants were not randomized alongside the treatment group being compared, and their naive inclusion risks bias from temporal shifts in patient populations and outcomes. This tension between efficiency gains and potential bias has spurred substantial interest in developing principled approaches for incorporating nonconcurrent data \citep{dodd2021platform, sridhara2022use, park2022use, bofill2022commentary, bofill2023use}.

A related topic is the incorporation of external or historical data in clinical trials \citep{schmidli2020beyond, li2023improving, yi2023testing, pan2024bayesian, gao2025improving}. However, nonconcurrent data in platform trials possess distinctive features that set them apart from general external or historical data. Unlike entirely external data sources, nonconcurrent data arise from the same master protocol, share similar eligibility criteria, and are collected under a unified data infrastructure. These characteristics provide greater assurance and mitigate certain sources of bias. More importantly, because patient recruitment occurs continuously in platform trials, the distribution of enrolled patients evolves smoothly over calendar time, with patients enrolled closer together in time tending to exhibit greater distributional similarity. We call this gradual change in the enrolled population \emph{temporal smoothness}. Consequently, enrollment calendar time, often dismissed as a nuisance factor in traditional clinical trials, becomes a critical element that must be explicitly modeled in platform trials.

Several methods have been proposed to adjust for temporal trends when using nonconcurrent data in platform trials. \cite{saville_bayesian_2022} introduced the Bayesian Time Machine, which models temporal trends over discrete time intervals using a Bayesian hierarchical framework. \cite{Roig2025TreatmentControl} extended this approach from binary to continuous endpoints and showed that its performance is sensitive to prior specification, making it less robust than frequentist model-based alternatives unless the prior is calibrated. \cite{wang_bayesian_2023} proposed a temporal effect-adjusted prior that adaptively downweights nonconcurrent data when temporal changes occur between calendar time periods. Frequentist model-based methods have also been developed, adjusting for temporal trends by including enrollment time as a covariate in the regression model \citep{lee_including_2020, roig2022model}. \cite{santacatterina2025identification} developed nonparametric identification and estimation approaches for using nonconcurrent controls under strong conditional exchangeability assumptions. For a broader overview of related methods, see \cite{bofill2023use}.

Despite these advancements, existing approaches share several limitations. First, with the exception of \cite{santacatterina2025identification}, they lack formally defined causal estimands and instead focus on model parameters. Second, most rely on parametric models that may be either too simplistic to capture complex temporal dynamics or too opaque to interpret. Alternatively, nonparametric approaches require strong assumptions such as conditional exchangeability. 
Third, they give little insight into how individual nonconcurrent observations contribute to the estimate, making it hard to diagnose whether borrowing is helping or hurting in a given application.

To address these limitations, we propose a flexible and interpretable Gaussian process (GP) framework for incorporating nonconcurrent data, grounded in causal inference and machine learning. Building on the estimand defined in \cite{qian2025estimandsrobustinferencetreatment}, we target a clearly specified causal quantity that anchors the statistical analysis. We model the outcome trajectory in enrollment time using GPs, which are nonparametric Bayesian models that provide principled function estimation and uncertainty quantification \citep{rasmussen2005gaussian}.
We establish two theoretical guarantees. Incorporating nonconcurrent controls reduces posterior variance for both the outcome mean and the treatment effect function (Theorem~\ref{theorem: variance reduction}), and the resulting bias is bounded by a quantity that does not increase when nonconcurrent controls are added (Theorem~\ref{theorem: bias bound} and Corollary~\ref{cor:noninflating_bias_bound}).  We further provide a frequentist interpretation through the connection to kernel ridge regression \citep{kanagawa2025gaussianprocessesreproducingkernels}, which yields an explicit weight for each observation and quantifies the effective contribution of nonconcurrent data. The framework extends to discrete outcomes and to covariate adjustment, and we illustrate it on a hypothetical platform trial constructed from SURMOUNT-1 \citep{jastreboff2022tirzepatide}, with implementation in the \textsf{R} package \textsf{RobinCID}.

The remainder of the paper develops the GP framework and estimator in Section~\ref{sec: methods}, establishes theoretical guarantees in Section~\ref{sec: theory}, discusses extensions in Section~\ref{sec: extensions}, reports simulation results in Section~\ref{sec: simulation}, and applies the method to SURMOUNT-1 in Section~\ref{sec: real data}.

\section{Methods}\label{sec: methods}
\subsection{Setup and Estimand}
Consider a platform trial with $J$ treatment arms. Let $E\in\mathbb{R}$ denote enrollment time (calendar time, measured on a prespecified scale), $\bm X\in\mathbb{R}^p$ denote baseline covariates unaffected by treatment assignment, and $\bm W=(E,\bm X)$. Let $A$ denote the treatment assignment indicator that equals $a$ if the individual is assigned to arm $a$. Define $Y^{(a)}$ as the potential outcome 
under arm $a$, for $a=1,\dots,J$,  and let $Y=Y^{(A)}$ denote the observed outcome under the consistency assumption \citep{splawa1990application, rubin1974estimating}. We assume throughout that $(\bm W_i, Y_i^{(1)}, \dots, Y_i^{(J)}, Y_i, A_i)$, $i=1,\dots, n$, are independent and identically distributed draws from $(\bm W, Y^{(1)}, \dots, Y^{(J)}, Y, A)$ with finite second moments. 

In platform trials, randomization typically depends on design variables such as site, enrollment window, and biomarker strata, and some arms may have zero assignment probability for certain participants due to eligibility restrictions or arm unavailability. Assumption~\ref{assump: randomization} formalizes this, where $\bm Z$ is a subvector of $\bm W$ collecting the observed design variables on which randomization is based. This assumption holds by design. 

\begin{assumption}[Randomization]\label{assump: randomization}
There exists $\bm Z$ such that $A \perp ( \bm W, Y^{(a)} ) \mid \bm Z$ for $a=1,\dots, J$, and
$P(A=a\mid \bm Z)=\pi_a(\bm Z)$ for known randomization probabilities $\{\pi_a(\cdot)\}_{a=1}^J$ with $\sum_{a=1}^J \pi_a(\bm Z)=1$.
\end{assumption}

The causal estimand of interest is the contrast between expected potential outcomes under two treatments $j$ and $k$, defined within the ECE trial population  \citep{qian2025estimandsrobustinferencetreatment}:
\begin{equation}\label{eq: theta-jk}
\left( \begin{array}{c}
  \theta_{jk}   \\
    \theta_{kj}   
\end{array}\right)  
   =   \left( \begin{array}{c}
 \mathbb{E} [\, Y^{(j)} \mid \pi_j(\bm Z)>0,\pi_k(\bm Z)>0 \,] \\
  \mathbb E [\, Y^{(k)} \mid \pi_j(\bm Z)>0,\pi_k(\bm Z)>0 \,]
\end{array}\right).
\end{equation}
The ECE trial population is a population of individuals with a positive probability of being assigned to either treatment $j$ or $k$, i.e., those satisfying $\pi_j(\bm Z)>0$ and $\pi_k(\bm Z)>0$. In platform trials, it is a superpopulation represented by all trial participants who are eligible for both treatments during time periods when both treatments are actively enrolling. We focus on the linear contrast $\delta_{jk} = \theta_{jk} - \theta_{kj}$, though the framework extends to other contrasts. Although the estimand is defined within the ECE trial population, our method incorporates nonconcurrent data to improve efficiency while controlling for bias.

\subsection{Single-Task Gaussian Processes}\label{sec: single-task}
We begin by introducing the single-task GP model, which fits a separate GP per treatment arm.  In our setting, each task corresponds to a treatment. For arm $a$, the model specifies a latent function $f_a:\mathbb{R}\to\mathbb{R}$ that captures how the potential outcome $Y_i^{(a)}$ depends on enrollment time $E_i$, and an additive noise term $\epsilon_{i,a}$:
\begin{align}\label{model: uni}
    Y_i^{(a)} & =  f_a(E_i) + \epsilon_{i,a},\quad \ f_a\sim \GP(m_a,k_a),\quad \epsilon_{i,a}\sim N(0,\sigma_a^2),
\end{align}
where $\sigma_a^2$ is the noise variance (fixed or assigned a hyperprior), $\GP(m_a, k_a)$ denotes a Gaussian process with mean
function $m_a : \mathbb{R} \to \mathbb{R}$ and covariance function
(kernel) $k_a : \mathbb{R} \times \mathbb{R} \to \mathbb{R}$, and
$f_a$ and $\epsilon_{i,a}$ are mutually independent. When the inputs are vectors $\bm u=(u_1,\dots,u_p)^T\in\mbb{R}^p$ and $\bm v=(v_1,\dots,v_q)^T\in\mbb{R}^q$, we write $m_a(\bm u) = (m_a(u_1), \dots, m_a(u_p))^T\in \mathbb R^p$ and $k_a(\bm u, \bm v) = [k_a(u_r, v_s)]_{r,s}\in \mathbb R^{p\times q}$. We assume that model \eqref{model: uni} holds for all individuals who satisfy the eligibility criteria for treatment $a$, regardless of whether they are concurrent or nonconcurrent. In Section \ref{subsec: covariates}, we discuss an extension that adjusts for covariates.

The kernel $k_a$ encodes temporal smoothness in $f_a$ and determines how information is shared across enrollment times. For example, the squared exponential kernel
\begin{align}
    k_{\rm SE}(e,e';\alpha,h)=
    \alpha^2\exp\left\{-\frac{|e-e'|^2}{2h^2}\right\} \label{kernel: SE}
\end{align}
has variance parameter $\alpha^2$ governing the marginal variability of
$f_a$ and length-scale $h$ controlling the range of temporal dependence. A larger $h$ enables greater borrowing from nonconcurrent data, while a smaller $h$ limits dependence to nearby enrollment times. 

We now derive the posterior distribution of $f_a$, which is also a GP and forms the basis for Bayesian inference. 
Let $n_a$ denote the number of individuals with $A_i=a$, and let $\bm{Y}_a$ and $\bm{E}_a$ denote the corresponding outcome and enrollment time vectors.
The posterior distribution of $f_a$ is
\begin{equation}\label{eq: posterior GP}
f_a \mid \bm{Y}_a, \bm{E}_a \sim \mathcal{GP}(\bar{m}_a, \bar{k}_a),
\end{equation}
where the posterior mean function $\bar{m}_a$ and posterior covariance function $\bar{k}_a$ are
\begin{align}
    \bar{m}_a(e) &= m_a(e) + k_a(e,\bm E_a) \left\{k_a(\bm E_a,\bm E_a)+\sigma_a^2\bm I_{n_a}\right\}^{-1} \left\{\bm{Y}_a - m_a(\bm{E}_a)\right\}, \label{eq: single posterior mean}\\
    \bar{k}_a(e, e') &= k_a(e, e') - k_a(e,\bm E_a) \left\{k_a(\bm E_a,\bm E_a)+\sigma_a^2\bm I_{n_a}\right\}^{-1} k_a(\bm E_a,e), \nonumber
\end{align}
and $\bm{I}_{n}$ denotes the $n \times n$ identity matrix. It is common to set the prior mean to $m_a \equiv 0$ \citep{rasmussen2005gaussian}. The posterior mean~\eqref{eq: single posterior mean} reveals a data-adaptive borrowing mechanism: the estimated outcome mean at enrollment time $e$ is a weighted combination of all observed outcomes $\bm{Y}_a$, with weights $k_a(e,\bm E_a) \left\{k_a(\bm E_a,\bm E_a)+\sigma_a^2\bm I_{n_a}\right\}^{-1}$ that depend on the length-scale $h$ and the proximity of $e$ to the observed enrollment times. Consequently, nonconcurrent participants contribute more to the estimate at a concurrent time point when the temporal trend is smooth and their enrollment times are close to the concurrent period.

\subsection{Multi-Task Gaussian Processes}\label{sec:multi-task Gaussian Processes}

Single-task GP models fit each arm's outcome trajectory independently, without imposing any shared structure across arms. However, in platform trials, temporal trends are often largely shared across arms, driven for example by shifts in baseline risk or the patient population, while treatment effects remain comparatively stable. As a result, the difference between treatment and control outcomes may be smoother over time than either arm-specific trajectory alone.

To exploit this structure, we adopt a multi-task GP framework that decomposes potential outcomes into a shared baseline function $f$ and a treatment effect function $\Delta$. Specifically,
for all individuals eligible for both treatments $j$ and $k$, whether concurrent or nonconcurrent, we assume:
\begin{align*}
  Y^{(j)}_i = f(E_i) + \Delta(E_i) + \epsilon_{i,j}, \quad  Y^{(k)}_i = f(E_i) + \epsilon_{i,k},
\end{align*}
where $f$ and $\Delta$ follow independent GP priors with kernel functions $k_f$ and $k_\Delta$ and mean functions $m$ and $\delta$, respectively, and $\epsilon_{i,a} \sim N(0, \sigma_a^2)$ for $a \in \{j, k\}$. Here, the prior mean function of $\Delta$ is a constant $\delta \in \mathbb{R}$, representing the prior guess for $\delta_{jk}$, and $\Delta$ captures enrollment-time-dependent deviations from this constant. This is a special case of the linear model of coregionalization \citep{alvarez_kernels_2012}, and can be written in compact notation as
\begin{align}\label{model: multi-task GP}
    \begin{pmatrix}
        Y_i^{(j)}\\ Y_i^{(k)}
    \end{pmatrix} &= \bm{f}_{jk}(E_i) + \bm{\epsilon}_i, \quad \bm{f}_{jk} \sim \GP(\bm{m}, \bm{k}), \quad \bm{\epsilon}_i \sim N\left(\mathbf{0}_2, \begin{pmatrix}
        \sigma_j^2 & 0\\ 0 & \sigma_k^2
    \end{pmatrix}\right),\\
    \bm{m} &= \begin{pmatrix}
        m+\delta \\ m
    \end{pmatrix}, \quad 
    \bm{k}(e,e') = \begin{pmatrix}
        1 & 1 \\
        1 & 1
    \end{pmatrix} \otimes k_f(e,e') + \begin{pmatrix}
        1 & 0 \\
        0 & 0
    \end{pmatrix} \otimes k_\Delta(e,e'), \nonumber
\end{align}
where $\bm{m}$ is a vector-valued mean function, $\bm{k}$ is a $2 \times 2$ matrix-valued covariance function capturing both within-arm variance and between-arm covariance, $\mathbf{0}_n$ is an $n$-dimensional column vector of zeros, and $\otimes$ denotes the Kronecker product. 

Although our multi-task GP model imposes a correlation structure between arms, it retains flexibility through kernel hyperparameters. Specifically, the variance of $k_\Delta$ controls the degree of treatment effect variation over time: larger values accommodate substantial temporal heterogeneity, while smaller values constrain $\Delta$ to remain nearly constant. Because this parameter is learned from data, the model adapts to the observed degree of treatment effect stability.

For estimation and uncertainty quantification of the treatment effect, we derive the posterior distribution of $\Delta$. Let $\bm E = (\bm E_j^T, \bm E_k^T)^T$.  The posterior distribution
of the treatment effect function $\Delta$ evaluated at a vector of
enrollment times $\bm e$ is $\Delta(\bm e) \mid \bm{Y}_j, \bm{Y}_k, \bm{E} \sim N(\bar{m}_\Delta(\bm e), \bar{k}_\Delta(\bm e, \bm e))$, where
\begin{align}\label{eq: multi-GP posterior mean and covariance}
    \bar{m}_\Delta(\bm e) &= \delta + k_{\Delta}(\bm e, \bm{E}_j) B_1^{-1} \left\{(\bm{Y}_j - m(\bm{E}_j)-\delta\bm 1_{n_j}) + B_2 (\bm{Y}_{k} - m(\bm{E}_k))\right\},\\
    \bar{k}_\Delta(\bm e,\bm e) &= k_{\Delta}(\bm e, \bm e) - k_{\Delta}(\bm e, \bm{E}_j) B_1^{-1} k_{\Delta}(\bm{E}_j, \bm e), \nonumber
\end{align}
with $\bm 1_n$ denoting a length-$n$ vector of ones, 
\(
    B_1 = k_{f,jj} + k_{\Delta,jj} + \sigma_j^2 \bm{I}_{n_j} - k_{f,jk} (k_{f,kk} + \sigma_{k}^2 \bm{I}_{n_k})^{-1} k_{f,kj}\) and \(
    B_2 = -k_{f,jk} (k_{f,kk} + \sigma_k^2 \bm{I}_{n_k})^{-1}.\)
Here, $k_{f,jk} := k_f(\bm{E}_j, \bm{E}_k)$, and $k_{\Delta,jj}$, $k_{f,jj}$, $k_{f,kk}$, and $k_{f,kj}$ are defined analogously.

\subsection{Estimation and Inference}\label{sec: estimation}

We describe estimation and inference for the treatment effect $\delta_{jk} = \theta_{jk} - \theta_{kj}$, defined in~\eqref{eq: theta-jk} as expectations over the ECE trial population. 
Our estimator fits the GP (single-task or multi-task) using all eligible participants, both concurrent and nonconcurrent, but averages over the concurrently eligible only, so that the estimand remains anchored to the ECE trial population while nonconcurrent data adaptively inform the GP fit.

For the single-task GP model, we fit the model using both concurrent and nonconcurrent data and obtain the posterior distributions of $f_j$ and $f_k$ as in~\eqref{eq: posterior GP}. We then estimate $\delta_{jk}$ via Bayesian bootstrap \citep{rubin1981bayesianbootstrap, oganisian_practical_2021}, which accounts for both posterior uncertainty in the outcome functions and sampling variability from the ECE trial population. Let $\mathcal{I}_{jk} = \{i: \pi_j(\bm{Z}_i) > 0, \pi_k(\bm{Z}_i) > 0\}$ denote the index set of concurrently eligible individuals and $n_{jk} = |\mathcal{I}_{jk}|$ its cardinality. For each treatment arm $a \in \{j, k\}$, we draw $B$ posterior samples $\{\hat{f}_a^{(b)}(\bm e)\}_{b=1}^B$ from the posterior distribution of $f_a(\bm e)$ in~\eqref{eq: posterior GP}, where $\bm e$ collects the enrollment times $\{e_i : i \in \mathcal{I}_{jk}\}$. For each $b = 1, \dots, B$, a bootstrap replicate of the treatment effect estimate is constructed as follows: given posterior samples $\hat{f}_j^{(b)}(e_i)$ and $\hat{f}_k^{(b)}(e_i)$ for all $i \in \mathcal{I}_{jk}$, draw a weight vector $(p_i^{(b)})_{i\in \mathcal{I}_{jk}}$ from a Dirichlet distribution $\text{Dir}(1, \dots, 1)$, and compute
\begin{equation*}
\hat{\delta}_{jk}^{(b)} = \sum_{i \in \mathcal{I}_{jk}} p_i^{(b)} \{\hat{f}_j^{(b)}(e_i) - \hat{f}_k^{(b)}(e_i)\}.
\end{equation*}
The empirical distribution of $\{\hat{\delta}_{jk}^{(b)}\}_{b=1}^B$ approximates the posterior distribution of $\delta_{jk}$.

For the multi-task GP model, the procedure is analogous, with $\hat{f}_j^{(b)}(e_i) - \hat{f}_k^{(b)}(e_i)$ replaced by posterior draws of $\Delta(e_i)$ from the closed-form posterior in~\eqref{eq: multi-GP posterior mean and covariance}. 

\subsection{Kernel}\label{sec: kernel}

The covariance function, or kernel, encodes prior beliefs about temporal smoothness and is central to a GP model.
Two commonly used choices are the squared exponential kernel~\eqref{kernel: SE}, which corresponds to infinitely smooth temporal trends, and the Mat\'ern kernel \citep{rasmussen2005gaussian}, which allows explicit control over smoothness through a hyperparameter $\nu$:
\begin{align*}
    k_{\text{Mat}}(e, e'; \alpha, \nu, h) = \alpha^2 \frac{2^{1-\nu}}{\Gamma(\nu)} \left( \sqrt{2\nu} \frac{|e - e'|}{h} \right)^\nu K_\nu \left( \sqrt{2\nu} \frac{|e - e'|}{h} \right),
\end{align*}
where $\Gamma$ is the gamma function and $K_\nu$ is the modified Bessel function of the second kind \citep{porcu2023maternmodeljourneystatistics}.
The length scale $h$ controls how rapidly correlation decays with the time gap $|e-e'|$, and the parameter $\nu$ controls how gradually the temporal trend changes between nearby times. 
Larger values of either parameter increase borrowing across enrollment times.
The special case $\nu = 1/2$ yields the Laplace kernel $\alpha^2 \exp\{-|e - e'|/h\}$, and $\nu \to \infty$ recovers the squared exponential kernel \citep{kanagawa_gaussian_2018}.

\section{Frequentist Interpretation and Theoretical Guarantees}\label{sec: theory}

\subsection{Connection with Kernel Ridge Regression}\label{sec: connection with RKHS}
GPs are closely connected with kernel ridge regression \citep{kanagawa_gaussian_2018,rasmussen2005gaussian}. Let $k$ be a \emph{positive definite kernel}, that is, a symmetric function such that for any finite-dimensional vector $\bm u$, the matrix $k(\bm u,\bm u)$ is positive semidefinite. The squared exponential and {\Matern} kernels described in Section \ref{sec: kernel} are canonical examples. The reproducing kernel Hilbert space (RKHS) $\mathcal{H}_k$ induced by $k$ is a Hilbert space of real-valued functions on $\mathbb{R}$, equipped with inner product $\langle\cdot,\cdot\rangle_{\mc H_k}$ and norm $||f||_{\RKHS{k}}=\langle f,f\rangle_{\RKHS{k}}^{1/2}$.

For the single-task model~\eqref{model: uni}, the kernel ridge regression estimator minimizes the regularized empirical risk \citep{kanagawa_gaussian_2018}:
\begin{align*}
    \hat f_a = \arg\min_{f\in\mc H_{k_a}}\frac{1}{n_a}\sum_{i: A_i= a}\{Y_i - f(E_i)\}^2+\lambda\|f\|_{\mc H_{k_a}}^2,
\end{align*}
with solution $
    \hat f_a(e)=k_a(e,\bm E_a)\left\{
    k_a(\bm E_a,\bm E_a)+n_a\lambda \bm I_{n_a}\right\}^{-1}\bm Y_a, $
where $\lambda>0$ is a regularization parameter controlling the smoothness of the estimator. Setting $m_a \equiv 0$ and $\sigma_a^2 = n_a \lambda$ in the GP posterior
mean~\eqref{eq: single posterior mean}
recovers this estimator exactly, establishing a well-known equivalence between kernel ridge regression and GP inference.

We now extend this connection to the multi-task setting. For ease of presentation, set $m \equiv 0$ and rewrite the multi-task model~\eqref{model: multi-task GP} by incorporating the treatment assignment into the kernel:
\begin{align*}
Y_i &= \delta\mathbb I(A_i=j) + u(E_i,A_i) + \eta_i,\quad u \sim \GP(0,k_u),\quad 
\eta_i \sim N(0,\sigma_{A_i}^2), 
\end{align*}
where $k_u$ is a composite kernel on $\mathbb R\times \{j,k\}$ defined by $ k_u((e,a),(e',a'))=k_f(e,e')+k_t(a,a')\cdot k_\Delta(e,e')$ with $k_t(a,a')=\mbb I(a=a'=j)$. Since $k_f$, $k_t$, and $k_\Delta$ are all positive definite, and $k_u$ is the sum of $k_f$ and the product of $k_t$ and $k_\Delta$, it inherits positive definiteness \citep{aronszajn1950theory}. 

In the frequentist view, the posterior mean of $u$ coincides with the minimizer of a regularized empirical risk in the RKHS $\RKHS{k_u}$ associated with $k_u$:
\begin{align}\nonumber
    \hat u = \arg\min_{u\in \RKHS{k_u}} \frac{1}{n_j+n_k}\sum_{i: A_i\in \{j, k\}} \left\{Y_i - \delta\mathbb I(A_i=j) - u(E_i,A_i)\right\}^2/\sigma_{A_i}^2 + \lambda \|u\|^2_{\RKHS{k_u}},
\end{align}
where $\delta$ is treated as a known constant. The solution is
\begin{align}\nonumber
    \hat u(e,a)=\delta \mathbb{I}(a=j) + k_u((e,a),(\bm E,\bm A))\{k_u((\bm E,\bm A),(\bm E,\bm A))+(n_j+n_k)\lambda \Sigma\}^{-1}\{\bm Y-(\delta\bm 1_{n_j}^T, \mathbf{0}_{n_k}^T)^T\},
\end{align}
where $\Sigma = \textup{diag}(\sigma_{j}^2 \bm{1}_{n_j}^T, \sigma_{k}^2 \bm{1}_{n_k}^T)$, $\bm E = (\bm E_j^T, \bm E_k^T)^T$, $\bm Y = (\bm Y_j^T, \bm Y_k^T)^T$, and $\bm A = (j\bm{1}_{n_j}^T, k\bm{1}_{n_k}^T)^T$.
With $(n_j+n_k)\lambda = 1$, the difference $\hat u(e, j) - \hat u(e, k)$ equals the posterior mean of $\Delta(e)$ in \eqref{eq: multi-GP posterior mean and covariance}.

\subsection{Weights and Effective Sample Size}
As shown in Section \ref{sec: connection with RKHS},  the posterior mean of both the single-task~\eqref{eq: single posterior mean} and multi-task~\eqref{eq: multi-GP posterior mean and covariance} models is a weighted linear combination of observed outcomes. For each concurrently eligible individual $i$, the posterior mean of the treatment effect can be written as $\bm w(e_i)^T \bm y + c$, where $\bm w:\mathbb{R}\to \mbb{R}^{n_j+n_k}$ is a weight function, $\bm y$ is a transformed outcome vector, and $c$ is a constant determined by the prior means.  For the single-task model, the weight function is $\bm w(\cdot) = (\bm w_j(\cdot)^T, -\bm w_k(\cdot)^T)^T$ and the transformed outcome vector is $\bm y = (\bm Y_j^T - m_j(\bm E_j)^T, \bm Y_k^T - m_k(\bm E_k)^T)^T$, where 
$\bm w_a(\cdot) = \{k_a(\bm E_a, \bm E_a) + \sigma_a^2 \bm I_{n_a}\}^{-1}
k_a(\bm E_a, \cdot)$ is the weight function for arm~$a$. 
For the multi-task model, the weight function is $\bm w(\cdot)=(\bm I_{n_j},B_2)^TB_1^{-1}k_{\Delta}(\bm{E}_j,\cdot)$ and the transformed outcome vector is $(\bm{Y}_j^T - m(\bm{E}_j)^T-\delta\bm 1_{n_j}^T, \bm{Y}_{k}^T - m(\bm{E}_k)^T)^T$. Collecting these weight vectors across all concurrently eligible individuals gives the weight matrix $W_{cc}=(\bm w(e_1),\bm w(e_2),\dots,\bm w(e_{n_{jk}}))^T\in\mathbb{R}^{n_{jk}\times (n_j+n_k)}$, so that the estimator of $\delta_{jk}$ is $\mathbf{1}_{n_{jk}}^T W_{cc} {\bm y} /n_{jk}$. This makes explicit how each concurrent and nonconcurrent outcome contributes to the estimate. 

Let $(w_1, w_2, \dots, w_{n_k})^T$ denote the weights for the control arm outcomes in estimating $\delta_{jk}$. Then the effective sample size for the control arm is 
$\mathrm{ESS} = (\sum_i w_i)^2 / \sum_i w_i^2$ \citep{kish1965survey}, representing the size of an unweighted control sample that would yield the same precision. A larger ESS indicates greater variance reduction from incorporating nonconcurrent controls. Since the GP weights $w_i$ can be negative while the ESS formula assumes nonnegative weights, we report ESS as an approximate diagnostic only.

\subsection{Theoretical Guarantees for Variance and Bias}

In this section, we show that incorporating nonconcurrent data reduces the posterior variance of the outcome mean under the single-task GP model~\eqref{model: uni} and of the treatment effect function $\Delta$ under the multi-task GP model~\eqref{model: multi-task GP}, without inflating the bias bound.

\begin{theorem}[Variance Reduction]\label{theorem: variance reduction}
Consider the GP models with fixed kernel hyperparameters. For $* \in \{a, \Delta\}$, let $\bar{k}_{*,\textup{cc}}$ denote the posterior covariance function when fitting the model using only concurrent data, and let $\bar{k}_{*}$ denote the posterior covariance function when incorporating nonconcurrent control data. Then for any vector of enrollment times $\bm{e}$:
\begin{enumerate}
    \item[(i)] Under the single-task GP model~\eqref{model: uni}, the posterior covariance function of $f_a$ satisfies $ \bar{k}_{a,\textup{cc}}(\bm{e}, \bm{e}) - \bar{k}_{a}(\bm{e}, \bm{e}) \succeq 0.$  Here, $\succeq 0$ denotes positive semi-definiteness.
    \item[(ii)] Under the multi-task GP model~\eqref{model: multi-task GP}, the posterior covariance function of the treatment effect $\Delta$ satisfies $
        \bar{k}_{\Delta,\textup{cc}}(\bm{e}, \bm{e}) - \bar{k}_{\Delta}(\bm{e}, \bm{e}) \succeq 0.$
\end{enumerate}
\end{theorem}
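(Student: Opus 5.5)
The argument treats both parts as instances of one fact: in a jointly Gaussian model, conditioning on more observations can only shrink the posterior covariance in the Loewner order. Equivalently, a Schur complement shrinks when the conditioning block is enlarged. I would prove a general lemma and then check that each GP model fits its hypotheses.

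\emph{General lemma.} Let $(\bm g, \bm y_1, \bm y_2)$ be jointly Gaussian, with $\bm g$ the vector of target function values. Then
\[
\mathrm{Cov}(\bm g\mid \bm y_1) - \mathrm{Cov}(\bm g\mid \bm y_1,\bm y_2) = \mathrm{Cov}\{\mathbb E(\bm g\mid \bm y_1,\bm y_2)\mid \bm y_1\}\succeq 0.
\]
This follows from the law of total variance applied conditionally on $\bm y_1$. Gaussian conditional covariances are nonrandom, so the expected conditional covariance equals the conditional covariance itself. A purely algebraic proof is also available. Write the posterior covariance as $K_{gg}-K_{gy}(K_{yy}+\Sigma_y)^{-1}K_{yg}$ and partition $\bm y=(\bm y_1,\bm y_2)$. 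Block inversion then gives the difference as $S^T M^{-1} S$, where $M$ is the Schur complement of the $\bm y_1$ block in $K_{yy}+\Sigma_y$. Here $S=K_{y_2g}-K_{y_2y_1}(K_{y_1y_1}+\Sigma_1)^{-1}K_{y_1g}$ and $M\succ 0$, since the noise variance makes $K_{yy}+\Sigma_y$ positive definite. So the difference is positive semidefinite.

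\emph{Part (i).} Take $\bm g=f_a(\bm e)$, $\bm y_1$ the concurrent outcomes of arm $a$, and $\bm y_2$ the nonconcurrent outcomes of arm $a$. Model~\eqref{model: uni} is assumed to hold for all eligible individuals regardless of concurrency. Hence $\bar k_a$ in \eqref{eq: posterior GP} is exactly the conditional covariance given $(\bm y_1,\bm y_2)$, and $\bar k_{a,\mathrm{cc}}$ is the conditional covariance given $\bm y_1$ alone, and the lemma applies. If arm $a$ has no nonconcurrent data, the two covariances coincide and the difference is zero.

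\emph{Part (ii).} Take $\bm g=\Delta(\bm e)$. Let $\bm y_1$ be the concurrent outcomes of arms $j$ and $k$, and $\bm y_2$ the nonconcurrent control outcomes. Under model~\eqref{model: multi-task GP}, $(\Delta(\bm e), \bm Y_j, \bm Y_k)$ is jointly Gaussian with covariance assembled from $k_f$, $k_\Delta$ and the noise. The main step here is to verify that the closed form \eqref{eq: multi-GP posterior mean and covariance} is genuinely the Gaussian conditional covariance of $\Delta(\bm e)$ given all outcomes. It must not be an expression that changes structure when $\bm E_k$ grows.

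To check this, note that $\mathrm{Cov}(\Delta(\bm e),\bm Y_k)=0$ and $\mathrm{Cov}(\Delta(\bm e),\bm Y_j)=k_\Delta(\bm e,\bm E_j)$. Inverting the $2\times 2$ block covariance via the Schur complement of the $\bm Y_k$ block reproduces exactly $B_1$, giving $\bar k_\Delta = k_\Delta - k_\Delta(\bm e,\bm E_j)B_1^{-1}k_\Delta(\bm E_j,\bm e)$. Once this identification holds for both the full and the concurrent-only data sets, the lemma gives $\bar k_{\Delta,\mathrm{cc}}(\bm e,\bm e)-\bar k_\Delta(\bm e,\bm e)\succeq 0$.

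Incidentally, the same identification shows directly that adding control rows enlarges $k_{f,jk}(k_{f,kk}+\sigma_k^2\bm I)^{-1}k_{f,kj}$ in the Loewner order, by the part (i) argument applied to $f(\bm E_j)$. Hence $B_1$ decreases, $B_1^{-1}$ increases, and $\bar k_\Delta$ decreases. This gives a second, direct algebraic proof of (ii).

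Throughout, the hyperparameters are held fixed, as the theorem stipulates. That is what makes the prior joint law identical across the two fits, so both posteriors arise from conditioning the same Gaussian vector.
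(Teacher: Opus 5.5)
Your proof is correct. For part (i), your algebraic form of the lemma is the paper's argument: the paper writes the difference as $k(\bm e,\cdot)(M^{-1}-N)k(\cdot,\bm e)$ and factors $M^{-1}-N$ through the Schur complement $S_{11}$, which multiplies out to your $S^T M^{-1}S$. For part (ii), your primary route is genuinely different. The paper never identifies $\bar k_\Delta$ as a conditional covariance given the enlarged data. Instead it runs the part (i) computation on $k_f$ with the control inputs to show $B_{1,\textup{cc}}-B_1\succeq 0$. It then uses that inversion reverses the Loewner order on positive definite matrices, followed by a congruence with $k_\Delta(\bm e,\bm E_j)$. That is exactly your ``incidental'' second proof. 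Your main route has to pay for the identification step, namely checking via the Schur complement of the $\bm Y_k$ block that the closed form is the Gaussian conditional for every $\bm E_k$; you do this correctly. In return it needs no matrix-monotonicity facts, and it exhibits the reduction explicitly as $\mathrm{Cov}\{\mathbb E(\Delta(\bm e)\mid \bm y_1,\bm y_2)\mid \bm y_1\}$. It would also cover adding nonconcurrent treatment-arm data, a case the paper's argument explicitly sets aside because extra treatment rows change the dimension of $B_1$ and of $k_\Delta(\bm e,\bm E_j)$. The paper's route, in exchange, isolates the mechanism: control data shrink $F_{\textup{Sch}}$ and hence $B_1 = k_{\Delta,jj}+\sigma_j^2\bm I_{n_j}+F_{\textup{Sch}}$. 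That decomposition is reused in the bias bound and its corollary.
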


Theorem~\ref{theorem: variance reduction}(i) establishes that incorporating nonconcurrent data reduces posterior uncertainty in the outcome model for treatment arm $a$ under the single-task GP. Part~(ii) extends this result to the treatment effect function $\Delta$ under the multi-task GP. Consequently, Theorem~\ref{theorem: variance reduction} implies variance reduction for any weighted average estimator with positive weights, including the arm-specific outcome mean estimator under the single-task GP and the treatment effect estimator under the multi-task GP via the Bayesian bootstrap of Section~\ref{sec: estimation}. 

These results hold conditional on the kernel hyperparameters, comparing the concurrent-only and nonconcurrent-incorporated estimators at the same fixed hyperparameter values. In practice, hyperparameters are learned from data via marginal likelihood maximization or MCMC \citep{rasmussen2005gaussian}, and may be estimated differently by the two estimators. The theoretical guarantees therefore do not apply unconditionally. Nonetheless, the simulation results in Section~\ref{sec: simulation} demonstrate that the efficiency gains persist under data-driven hyperparameter estimation.

Theorem~\ref{theorem: variance reduction} follows from a projection argument. Under the single-task GP, the posterior covariance is
$\bar{k}(\bm{e},\bm{e}) = k(\bm{e},\bm{e}) - k(\bm{e},\bm{E})\{k(\bm{E},\bm{E})+\sigma^2\bm{I}\}^{-1}k(\bm{E},\bm{e})$. Under the multi-task model derived in \eqref{eq: multi-GP posterior mean and covariance}, $B_1$ replaces $k(\bm{E},\bm{E})+\sigma^2\bm{I}$. 
The subtracted term is the variance explained by projecting onto the span of the observed data in the reproducing kernel Hilbert space. Incorporating nonconcurrent controls enlarges $\bm{E}$ and thus the subspace. Since orthogonal projections onto nested subspaces are ordered in the positive semidefinite sense, the posterior covariance is non-increasing, reflecting the familiar fact that conditioning a Gaussian on additional variables never increases conditional variance.

For the treatment effect $\Delta$ under the multi-task model, the argument is less direct. Because $\Delta$ appears only in the treatment arm, control-arm data cannot inform it directly. Instead, they improve estimation of the shared baseline $f$. Express $B_1$ as $B_1 = k_{\Delta,jj} + \sigma_j^2\bm{I}_{n_j}+ F_{\textup{Sch}}$, where $F_{\textup{Sch}} = k_{f,jj} -k_{f,jk}(k_{f,kk}+\sigma_k^2\bm{I}_{n_k})^{-1}k_{f,kj}$ is the conditional variance of $f(\bm{E}_j)$ given the control data. Adding nonconcurrent controls reduces $F_{\textup{Sch}}$ and hence $B_1$. Since $\bar{k}_\Delta = k_\Delta - k_\Delta\, B_1^{-1}\, k_\Delta^T$, a smaller $B_1$ means a larger subtracted term and thus lower posterior variance. Intuitively, richer control data improve estimation of $f$, which in turn yields a more precise estimate of $\Delta$.

Theorem~\ref{theorem: variance reduction} also has a direct estimation-error interpretation. The posterior variance $\bar{k}(e,e)$ equals $\mathbb{E}_{f\sim\GP(\bar m,\bar {k})}[\{f(e)-\bar m(e)\}^2]$, the average squared estimation error at enrollment time $e$ under the GP posterior. The theorem therefore establishes that incorporating nonconcurrent controls reduces this average squared estimation error.

Beyond variance reduction, a natural concern is whether nonconcurrent controls introduce bias in the presence of temporal trends. We next examine bias from using nonconcurrent controls under a fixed data-generating process, treating the GP as a working model. 
\begin{assumption}[Data-generating process]\label{assump:dgp}
     Assume the structural model $Y^{(a)}=f_0(E)+\mbb I(a=j)\Delta_0(E)+\epsilon_a$ for $a\in\{j,k\}$, where $f_0,\Delta_0$ are deterministic functions and $\epsilon_a$ is independent of $(A,E)$ with mean zero and variance $\sigma_a^2.$ 
\end{assumption}

We work conditionally on the enrollment times $(\bm E_j, \bm E_k)$ and the concurrently eligible enrollment times $(E_i)_{i\in\mathcal{I}_{jk}}$, and treat all kernel hyperparameters as fixed.
Let $F$ denote the marginal distribution of $E$ in the ECE trial population, so that $\delta_{jk}= \mathbb{E}_F[\Delta_0(E)]$ by Assumption~\ref{assump:dgp} and iterated expectations. Let $F_n = n_{jk}^{-1}\sum_{i\in\mathcal{I}_{jk}} \delta_{E_i}$ denote the empirical distribution of the concurrently eligible enrollment times, and define the sample-level estimand
$\tilde\delta_{jk} = \mathbb{E}_{F_n}[\Delta_0(E)] = n_{jk}^{-1}\sum_{i\in\mathcal{I}_{jk}} \Delta_0(E_i)$.
Our estimator is $\hat\psi := B^{-1}\sum_{b=1}^B \hat\delta_{jk}^{(b)}$, the average of the Bayesian bootstrap draws from Section~\ref{sec: estimation}. Define the kernel mean embedding under $F_n$: $\mu_n(\cdot) = \mathbb{E}_{F_n}[k_\Delta(E,\cdot)] = n_{jk}^{-1}\sum_{i\in\mathcal{I}_{jk}} k_\Delta(E_i,\cdot)$.

\begin{theorem}[Bias bound]\label{theorem: bias bound}
Assume $\Delta_0 - \delta \in \mathcal{H}_{k_\Delta}$ and $f_0 - m \in \mathcal{H}_{k_f}$. Then, conditional on $(\bm E_j,\bm E_k,(E_i)_{i\in\mathcal{I}_{jk}})$,
\begin{equation}\label{eq: bias bound thm}
    \left|\mathbb{E}[\hat\psi \mid \bm E_j, \bm E_k, (E_i)_{i\in\mathcal{I}_{jk}}] - \tilde\delta_{jk}\right| \leq \|\Delta_0 - \delta\|_{\mathcal{H}_{k_\Delta}}\, \left[\mathbb{E}_{F_n}\{\bar{k}_\Delta(E,E)\}\right]^{\frac{1}{2}}
    \;+\; \|f_0 - m\|_{\mathcal{H}_{k_f}}\, \left\{{\mu_n(\bm E_j)\, B_1^{-1}\, G_f\, B_1^{-1}\, \mu_n(\bm E_j)^T}\right\}^{\frac{1}{2}},
\end{equation}
where $\mathbb{E}_{F_n}\{\bar{k}_\Delta(E,E)\} = n_{jk}^{-1}\sum_{i\in\mathcal{I}_{jk}}\bar{k}_\Delta(E_i,E_i)$ is the average posterior variance of $\Delta$ over the concurrently eligible individuals, and $G_f$ is the $n_j \times n_j$ positive semidefinite Gram matrix
\begin{equation}\label{eq: Gram matrix}
    G_f = F_{\textup{Sch}} - \sigma_k^2\, k_{f,jk}\, (k_{f,kk} + \sigma_k^2 \bm I_{n_k})^{-2}\, k_{f,kj} \preceq F_{\textup{Sch}}.
\end{equation}
\end{theorem}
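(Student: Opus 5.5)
The plan is to compute the conditional expectation of $\hat\psi$ in closed form, split the bias into a $\Delta_0$-part and an $f_0$-part, and bound each with the reproducing property and Cauchy--Schwarz in the relevant RKHS.

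\textbf{Step 1: reduce to the posterior mean.} The Dirichlet weights have mean $1/n_{jk}$ and are independent of the posterior draws. The draws of $\Delta(E_i)$ have mean $\bar m_\Delta(E_i)$. Hence
\[
\mathbb{E}[\hat\psi\mid\cdot]=n_{jk}^{-1}\sum_{i\in\mathcal I_{jk}}\mathbb{E}_\epsilon[\bar m_\Delta(E_i)],
\]
where the outer expectation is over the noise. Write $g=f_0-m$, $\tilde\Delta=\Delta_0-\delta$, $R=(k_{f,kk}+\sigma_k^2\bm I_{n_k})^{-1}$, and $c=B_1^{-1}\mu_n(\bm E_j)^T\in\mathbb R^{n_j}$. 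Substituting Assumption~\ref{assump:dgp} into \eqref{eq: multi-GP posterior mean and covariance}, the noise has mean zero, and $B_1$, $B_2$ and $\mu_n$ do not depend on $\bm Y$. The bias therefore equals $T_1+T_2$, where
\[
T_1=c^T\tilde\Delta(\bm E_j)-\mathbb{E}_{F_n}\tilde\Delta(E),\qquad T_2=c^T\{g(\bm E_j)-k_{f,jk}R\,g(\bm E_k)\}.
\]

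\textbf{Step 2: the $\Delta_0$-term.} By the reproducing property, $T_1=\langle\tilde\Delta,v\rangle_{\mathcal H_{k_\Delta}}$ with $v=k_\Delta(\cdot,\bm E_j)c-\mu_n$. Cauchy--Schwarz gives $|T_1|\le\|\tilde\Delta\|\,\|v\|$. Expanding the norm,
\[
\|v\|^2=c^Tk_{\Delta,jj}c-2c^T\mu_n(\bm E_j)^T+\|\mu_n\|^2 .
\]
Using $\mu_n(\bm E_j)^T=B_1c$ and $n_{jk}^{-2}\bm 1^T\bar k_\Delta(\bm e,\bm e)\bm 1=\|\mu_n\|^2-\mu_n(\bm E_j)B_1^{-1}\mu_n(\bm E_j)^T$, where $\bm e$ is the vector of concurrent times, this becomes
\[
\|v\|^2=n_{jk}^{-2}\bm 1^T\bar k_\Delta(\bm e,\bm e)\bm 1+c^T(k_{\Delta,jj}-B_1)c .
\]
Since $B_1=k_{\Delta,jj}+\sigma_j^2\bm I+F_{\rm Sch}$ with $F_{\rm Sch}\succeq0$ (a Schur complement, i.e.\ a conditional covariance), the second term is $\le 0$. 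For the first term, $\bar k_\Delta(\bm e,\bm e)$ is PSD, so Cauchy--Schwarz on its entries gives $n^{-2}\bm 1^TM\bm 1\le n^{-1}\operatorname{tr}M$. This says the variance of an average is at most the average variance. Together these give $\|v\|^2\le\mathbb{E}_{F_n}\bar k_\Delta(E,E)$, which is the first term of \eqref{eq: bias bound thm}.

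\textbf{Step 3: the $f_0$-term.} Similarly, $T_2=\langle g,\varphi\rangle_{\mathcal H_{k_f}}$ with
\[
\varphi=k_f(\cdot,\bm E_j)c-k_f(\cdot,\bm E_k)Rk_{f,kj}c .
\]
Then
\[
\|\varphi\|^2=c^T\{k_{f,jj}-2k_{f,jk}Rk_{f,kj}+k_{f,jk}Rk_{f,kk}Rk_{f,kj}\}c .
\]
Substituting $k_{f,kk}=R^{-1}-\sigma_k^2\bm I$ collapses the bracket to $k_{f,jj}-k_{f,jk}Rk_{f,kj}-\sigma_k^2k_{f,jk}R^2k_{f,kj}=G_f$. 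This gives the second term of \eqref{eq: bias bound thm} after rewriting $c$. Positive semidefiniteness of $G_f$ is automatic, since $G_f$ is a Gram matrix of the RKHS elements $\varphi$. The ordering $G_f\preceq F_{\rm Sch}$ is immediate because $\sigma_k^2k_{f,jk}R^2k_{f,kj}\succeq0$.

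\textbf{Main obstacle.} The delicate step is Step 2: identifying $\|v\|^2$ with the posterior-variance quantity. This needs the exact algebraic identity relating $\|v\|^2$ to $\bar k_\Delta$, together with the sign of $k_{\Delta,jj}-B_1$. I would verify that identity carefully first. I would then check that the Jensen/trace step goes in the right direction, namely that the average of the diagonal dominates the normalized sum of all entries.
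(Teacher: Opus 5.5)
Your proof is correct. Steps~1 and~3 coincide with the paper's: the same split into a treatment-effect term and a baseline term, and the same reproducing-property and Cauchy--Schwarz computation of $G_f$ via $k_{f,kk}=R^{-1}-\sigma_k^2\bm I$.

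The real difference is Step~2. The paper argues pointwise. For each $e$ it bounds the interpolation error $b_h(e)=k_\Delta(e,\bm E_j)B_1^{-1}\bm h-h(e)$ by $\|h\|_{\mathcal H_{k_\Delta}}\bar k_\Delta(e,e)^{1/2}$, using the representer $k_\Delta(\cdot,e)-\sum_i w_i(e)k_\Delta(\cdot,E_{j,i})$. It then averages over $F_n$ with Jensen. You instead apply Cauchy--Schwarz once, to the averaged representer $v=k_\Delta(\cdot,\bm E_j)c-\mu_n$. This gives the exact identity $\|v\|^2=n_{jk}^{-2}\bm 1^T\bar k_\Delta(\bm e,\bm e)\bm 1-c^T(\sigma_j^2\bm I+F_{\textup{Sch}})c$. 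Only at the end do you pass from the variance of the average to the average variance, via $n^{-2}\bm 1^TM\bm 1\le n^{-1}\operatorname{tr}M$. Both the identity and the direction of that inequality are right.

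Your route buys a sharper intermediate bound: $|T_1|\le\|\Delta_0-\delta\|_{\mathcal H_{k_\Delta}}\{n_{jk}^{-2}\bm 1^T\bar k_\Delta(\bm e,\bm e)\bm 1\}^{1/2}$. This is the posterior standard deviation of the sample-average effect rather than the root average posterior variance. It is still non-increasing under borrowing, because Theorem~\ref{theorem: variance reduction}(ii) orders the whole matrix $\bar k_\Delta(\bm e,\bm e)$, not just its diagonal. Your identity also gives more. Since $c^TG_fc\le c^TF_{\textup{Sch}}c\le c^T(\sigma_j^2\bm I+F_{\textup{Sch}})c$, it yields $\|v\|^2+\|\varphi\|^2\le n_{jk}^{-2}\bm 1^T\bar k_\Delta(\bm e,\bm e)\bm 1$. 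This gives Corollary~\ref{cor:noninflating_bias_bound} directly, even with this smaller factor, without the paper's separate Jensen argument on $x\mapsto x^T B_1^{-1}G_fB_1^{-1}x$.

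The paper's pointwise route gives something yours does not: a bias bound for the posterior mean $\bar m_\Delta(e)$ at each individual enrollment time, which is of independent interest.

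One minor wording point. $G_f$ is not the Gram matrix of $\varphi$ itself. It is the Gram matrix of the $n_j$ functions obtained from $\varphi$ by setting $c$ to the standard basis vectors; equivalently, $\|\varphi\|^2=c^TG_fc$ holds for every $c\in\mathbb R^{n_j}$. That is what justifies $G_f\succeq 0$.
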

The bound~\eqref{eq: bias bound thm} measures the bias of $\hat\psi$ relative to the sample-level estimand $\tilde\delta_{jk}$, which equals the average of the true treatment effect function $\Delta_0$ over the observed concurrently eligible enrollment times. By the triangle inequality, the bias relative to the population estimand $\delta_{jk}$ satisfies
\begin{equation}\label{eq:bias-decomposition}
    \left|\mathbb{E}[\hat\psi \mid \bm E_j, \bm E_k, (E_i)_{i\in\mathcal{I}_{jk}}] - \delta_{jk}\right|
    \leq
    \left|\mathbb{E}[\hat\psi \mid \bm E_j, \bm E_k, (E_i)_{i\in\mathcal{I}_{jk}}] - \tilde\delta_{jk}\right|
    + |\tilde\delta_{jk} - \delta_{jk}|,
\end{equation}
where the first term is bounded by~\eqref{eq: bias bound thm} and the second satisfies $|\tilde\delta_{jk} - \delta_{jk}| = O_p(n_{jk}^{-1/2})$ by the central limit theorem, since the $E_i$ are independent draws from $F$ with $\textup{Var}_F(\Delta_0(E)) < \infty$.

In~\eqref{eq: bias bound thm}, the first term bounds the treatment-effect bias and is governed by $\mathbb{E}_{F_n}\{\bar{k}_\Delta(E,E)\}$, which by Theorem~\ref{theorem: variance reduction}(ii) does not increase when nonconcurrent controls are incorporated. The second term captures the baseline-misspecification contribution. It vanishes when $m = f_0$ and otherwise depends on $G_f$, which measures residual uncertainty in the shared baseline $f$ after conditioning on control data. This term is not monotone in general, since $F_{\textup{Sch}}$ decreases while $B_1^{-1}$ increases as nonconcurrent controls are added. The next corollary gives a coarser but monotone bound.

\begin{corollary}[A non-increasing bias bound]\label{cor:noninflating_bias_bound}
Under the assumptions of Theorem~\ref{theorem: bias bound},
\begin{equation}\label{eq: bias bound cor}
    \left|\mathbb{E}[\hat\psi \mid \bm E_j, \bm E_k,(E_i)_{i\in\mathcal{I}_{jk}}] - \tilde\delta_{jk}\right|
\le
\Bigl(\|\Delta_0-\delta\|_{\mathcal H_{k_\Delta}}+\|f_0-m\|_{\mathcal H_{k_f}}\Bigr)
\left[\mathbb E_{F_n}\{\bar k_\Delta(E,E)\}\right]^{\frac{1}{2}}.
\end{equation}
Moreover, the right-hand side does not increase when nonconcurrent controls are incorporated.
\end{corollary}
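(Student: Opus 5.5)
Starting from Theorem~\ref{theorem: bias bound}, it suffices to show that the second term in \eqref{eq: bias bound thm} is dominated by $\|f_0-m\|_{\mathcal H_{k_f}}[\mathbb E_{F_n}\{\bar k_\Delta(E,E)\}]^{1/2}$. Equivalently, writing $v=\mu_n(\bm E_j)^T\in\mathbb R^{n_j}$, we need
\[
v^T B_1^{-1} G_f B_1^{-1} v \le \mathbb E_{F_n}\{\bar k_\Delta(E,E)\}.
\]
Monotonicity is then immediate. Theorem~\ref{theorem: variance reduction}(ii) gives $\bar k_{\Delta,\textup{cc}}(\bm e,\bm e)-\bar k_\Delta(\bm e,\bm e)\succeq 0$ with $\bm e=(E_i)_{i\in\mathcal I_{jk}}$. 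Hence every diagonal entry, and therefore the average $\mathbb E_{F_n}\{\bar k_\Delta(E,E)\}$, does not increase when nonconcurrent controls are added. The RKHS norms do not depend on the data.

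For the key inequality I would chain two matrix bounds. First, \eqref{eq: Gram matrix} gives $G_f\preceq F_{\textup{Sch}}$. Since $k_{\Delta,jj}+\sigma_j^2\bm I_{n_j}\succeq 0$, we have $F_{\textup{Sch}}\preceq B_1$, so $G_f\preceq B_1$. Conjugating by the symmetric matrix $B_1^{-1}$ then yields $B_1^{-1}G_fB_1^{-1}\preceq B_1^{-1}$. Here $B_1\succ0$ because $\sigma_j^2>0$ and $F_{\textup{Sch}}\succeq0$ as a Schur complement. This gives
\[
v^TB_1^{-1}G_fB_1^{-1}v\le v^TB_1^{-1}v .
\]

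Second, we must compare $v^TB_1^{-1}v$ with the average posterior variance. By linearity, $v^T=\mu_n(\bm E_j)=n_{jk}^{-1}\mathbf 1^T k_\Delta(\bm e,\bm E_j)$, so
\[
v^TB_1^{-1}v=n_{jk}^{-2}\mathbf 1^T k_\Delta(\bm e,\bm E_j)B_1^{-1}k_\Delta(\bm E_j,\bm e)\mathbf 1 .
\]
This is the variance explained by the data for the linear functional $n_{jk}^{-1}\sum_i\Delta(E_i)$. It does not appear to be bounded directly by the average posterior variance $n_{jk}^{-1}\sum_i\bar k_\Delta(E_i,E_i)$. What does hold is that $n_{jk}^{-2}\mathbf 1^T\bar k_\Delta(\bm e,\bm e)\mathbf 1\le n_{jk}^{-1}\operatorname{tr}\bar k_\Delta(\bm e,\bm e)$, by Cauchy--Schwarz (or Jensen) applied to the PSD matrix $\bar k_\Delta(\bm e,\bm e)$.

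This second step is the main obstacle, and I would first retrace the bias derivation behind Theorem~\ref{theorem: bias bound}. There the $f$-bias term has the representation $\langle f_0-m,\, g\rangle_{\mathcal H_{k_f}}$ with $\|g\|^2=v^TB_1^{-1}G_fB_1^{-1}v$. The aim is to relate $g$ to the RKHS representer of the $\Delta$-error functional $\mu_n-k_\Delta(\cdot,\bm E_j)B_1^{-1}v$, whose squared norm is exactly the posterior variance of the averaged functional. That quantity is in turn bounded by $\mathbb E_{F_n}\{\bar k_\Delta(E,E)\}$ through the Cauchy--Schwarz step above.

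Concretely, I would try to show that $v^TB_1^{-1}G_fB_1^{-1}v$ is at most this posterior variance of $n_{jk}^{-1}\sum_i\Delta(E_i)$. To do this I would view $B_1$ as the joint conditional covariance of $f(\bm E_j)+\Delta(\bm E_j)+\epsilon_j$ given the control data. The $f$-part of that joint covariance is $G_f$-like and the $\Delta$-part is $k_{\Delta,jj}$. Posterior variance of a Gaussian functional then decomposes as prior variance minus explained variance, and the $f$-noise contribution enters as a nonnegative piece dominated by what remains. If that decomposition fails, I would fall back on bounding the term by a norm-squared of a projection residual and compare the two residuals directly.
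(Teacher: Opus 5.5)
Your monotonicity argument and your reduction to the second term of Theorem~\ref{theorem: bias bound} are fine. The gap is that the key inequality $v^TB_1^{-1}G_fB_1^{-1}v\le\mathbb E_{F_n}\{\bar k_\Delta(E,E)\}$ is never proved.

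Your first link, $G_f\preceq F_{\textup{Sch}}\preceq B_1$, is a dead end rather than a partial step. It discards the $k_{\Delta,jj}$ part of $B_1$ and leaves $v^TB_1^{-1}v$. That quantity is the \emph{explained} variance $\|\mu_n\|^2_{\mathcal H_{k_\Delta}}-n_{jk}^{-2}\mathbf 1^T\bar k_\Delta(\bm e,\bm e)\mathbf 1$, which approaches the prior variance, not zero, when the data are informative. For example, take a single treated time $e=E_{j,1}$ with $k_\Delta(e,e)=1$ and $B_1=1+s$. Then it equals $1/(1+s)$, while the posterior variance is $s/(1+s)$.

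What follows in your proposal is only a plan, and it rests on a false identity. The squared norm of the $\Delta$-error representer $\mu_n-k_\Delta(\cdot,\bm E_j)B_1^{-1}v$ is \emph{not} the posterior variance of $n_{jk}^{-1}\sum_i\Delta(E_i)$. The difference between the two is exactly the term you need to control.

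The missing idea is to stop at $G_f\preceq F_{\textup{Sch}}\preceq\sigma_j^2\bm I_{n_j}+F_{\textup{Sch}}=B_1-k_{\Delta,jj}$. Then expand the norm exactly as in the treatment-effect part of the proof of Theorem~\ref{theorem: bias bound}:
\[
0\le\bigl\|\mu_n-k_\Delta(\cdot,\bm E_j)B_1^{-1}v\bigr\|^2_{\mathcal H_{k_\Delta}}
=n_{jk}^{-2}\mathbf 1^T\bar k_\Delta(\bm e,\bm e)\mathbf 1-v^TB_1^{-1}\bigl(\sigma_j^2\bm I_{n_j}+F_{\textup{Sch}}\bigr)B_1^{-1}v .
\]
This gives $v^TB_1^{-1}G_fB_1^{-1}v\le n_{jk}^{-2}\mathbf 1^T\bar k_\Delta(\bm e,\bm e)\mathbf 1$, and your trace/Cauchy--Schwarz step then finishes the argument.

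Your sketched Gaussian decomposition, which splits the conditional variance into an independent $\Delta$-residual piece and an $f$-plus-noise piece, is this identity in probabilistic form. However, the $f$-part there is $F_{\textup{Sch}}$, not $G_f$, and the computation has to be carried out.

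The paper uses the same inequality in the opposite order. It first applies Jensen to the convex form $x\mapsto x^TB_1^{-1}G_fB_1^{-1}x$, using $\mu_n(\bm E_j)^T=\mathbb E_{F_n}[k_\Delta(\bm E_j,E)]$. It then applies the pointwise version with $\bm v(e)$ and $k_\Delta(e,e)$. Once repaired, your route yields a slightly sharper intermediate bound: the posterior variance of the average itself.
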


The corollary pools both misspecification terms under a single factor $[\mathbb{E}_{F_n}\{\bar{k}_\Delta(E,E)\}]^{1/2}$, which does not increase when nonconcurrent controls are incorporated, following from Theorem \ref{theorem: variance reduction}(ii). This bound is most informative when baseline misspecification is small relative to treatment-effect misspecification, in which case it is nearly sharp and the monotonicity guarantee is meaningful. When baseline misspecification dominates, Theorem~\ref{theorem: bias bound} provides a more refined decomposition.

Together, Theorems~\ref{theorem: variance reduction} and~\ref{theorem: bias bound} and Corollary~\ref{cor:noninflating_bias_bound} establish a favorable bias-variance profile for incorporating nonconcurrent controls: posterior variance is guaranteed to decrease, and the bias, though not guaranteed to decrease, is controlled by a non-increasing bound, so the GP-based approach improves estimation efficiency without unduly inflating bias when the working model is reasonable.

\section{Extensions}\label{sec: extensions}
\subsection{Generalized GP for Discrete Outcomes}

We extend the framework to accommodate discrete outcomes such as binary or count responses. Analogous to generalized linear models \citep{Nelder1972GLM}, we replace the Gaussian observation model with an exponential family likelihood and introduce a link function $g(\cdot)$ connecting the conditional mean to the latent GP. In the single-task setting, the potential outcome under arm $a$ is modeled as
\(
    Y_i^{(a)} \mid f_a(E_i) \sim p\bigl(y \mid g^{-1}(f_a(E_i))\bigr), \ f_a \sim \GP(m_a, k_a),
\)
where $g^{-1}$ is the inverse link function and $g^{-1}(f_a(E_i)) = \mathbb{E}[Y_i^{(a)} \mid f_a(E_i)]$ is the conditional mean. For binary outcomes with the logit link, this gives $Y_i^{(a)} \mid f_a(E_i) \sim \mathrm{Bernoulli}\bigl((1+\exp\{-f_a(E_i)\})^{-1}\bigr)$, so that $f_a$ operates on the log-odds scale. For count outcomes with the log link, it gives $Y_i^{(a)} \mid f_a(E_i) \sim \mathrm{Poisson}(\exp\{f_a(E_i)\})$, so that $f_a$ operates on the log-rate scale. In both cases, the GP prior on $f_a$ enables the temporal information borrowing described in Section~\ref{sec: single-task}. The multi-task extension follows analogously by replacing the Gaussian likelihood in~\eqref{model: multi-task GP} with componentwise exponential family likelihoods applied to the latent vector $\bm{f}_{jk}(E_i)$.

Since the posterior distribution is no longer analytically tractable under a non-Gaussian likelihood, inference proceeds via approximation methods such as the Laplace approximation \citep{WilliamsBayesianClassification1998}, expectation propagation \citep{Minka2001ApproximateBayes}, or Markov chain Monte Carlo \citep{Neal1999RegressionandClassification, rasmussen2005gaussian}. Once approximate posterior samples are obtained, estimation of treatment contrasts follows the procedure in Section~\ref{sec: estimation}.

\subsection{Covariate Adjustment} \label{subsec: covariates}
The models in Sections~\ref{sec: single-task} and~\ref{sec:multi-task Gaussian Processes} model the outcome as a function of enrollment time $E$ plus independent noise. When outcomes also depend on baseline covariates $\bm X$ that vary with enrollment time, these covariates are absorbed into the residual $\epsilon_{i,a}$, violating the independent noise assumption in~\eqref{model: uni} and~\eqref{model: multi-task GP}. Adjusting for $\bm X$ is therefore important to ensure model validity and to reduce residual variance, thereby improving efficiency.

A direct approach is to include covariates in the GP mean function. Let $\bm X_i \in \mathbb{R}^p$ denote the pre-treatment covariate vector for individual $i$. For the single-task model~\eqref{model: uni}, linear covariate adjustment gives
\(
    Y_i^{(a)} = \bm X_i^T \beta_a + f_a(E_i) + \epsilon_{i,a},
\)
where $\beta_a \in \mathbb{R}^p$ captures linear covariate effects and 
$\epsilon_{i,a}\mid (E_i,\bm X_i)\stackrel{\mathrm{i.i.d.}}{\sim}N(0,\sigma_a^2)$.
The coefficient $\beta_a$ can be assigned a prior and marginalized out, yielding a GP with an augmented covariance function
\citep{rasmussen2005gaussian}. For the multi-task model~\eqref{model: multi-task GP}, covariate adjustment enters analogously through the mean functions of the shared baseline process $f$ and, optionally, through the treatment effect function $\Delta$ to capture treatment effect heterogeneity with respect to $\bm X$. In both cases, the posterior distributions retain the same structural form as in~\eqref{eq: single posterior mean} and~\eqref{eq: multi-GP posterior mean and covariance}.

The mean-function approach above captures linear covariate effects. To accommodate nonlinear effects, one can additionally incorporate covariates through the kernel. A natural choice is an additive decomposition: 
\(
k((e,\bm x), (e',\bm x')) = k_E(e, e') + k_X(\bm x, \bm x'),
\)
where $k_E$ governs temporal smoothness and $k_X$ captures covariate similarity.
Because the two components enter additively, the temporal trend is shared across all covariate profiles. For $k_X$, the automatic relevance determination (ARD) squared exponential kernel \citep{Neal1996BayesianLearning} assigns a separate length-scale to each covariate dimension:
\(
k_{\rm ARD}(\bm x, \bm x') = \alpha^2 \exp\left\{-\sum_{d=1}^{p}(x_d - x_d')^2/(2h_d^{2})\right\},
\)
where $\alpha^2$ is the variance parameter and $h_1, \ldots, h_p$ are dimension-specific length-scales. We recommend standardizing each covariate to zero mean and unit variance before fitting, so that the length-scale parameters $h_d$ are on a comparable scale across dimensions. An alternative is a product kernel $k_E \times k_X$, which permits time-by-covariate interactions but can be problematic in moderate-to-high dimensions. The product form requires two points to be close in both enrollment time and all covariate dimensions for their covariance to be nonnegligible, so the covariance between distinct points vanishes rapidly as the dimension grows, effectively disabling information borrowing \citep{duvenaud_automatic_nodate}. The additive kernel avoids this pathology because similarity in any subset of dimensions, temporal or covariate, contributes positively to the overall covariance.

\section{Simulation}\label{sec: simulation}

We conduct simulation studies to assess the performance of the proposed single-task and multi-task GP methods in estimating treatment effects within a platform trial. The trial (Figure~\ref{fig: simulation diagram}) consists of two enrollment windows. During the first window (EW1), participants are randomized equally between Treatment 1 (control) and Treatment 2. In the second window (EW2), Treatment 3 is introduced and participants are randomized equally across all three arms. 

\begin{figure}[ht]
    \centering
    \includegraphics[width=0.6\linewidth]{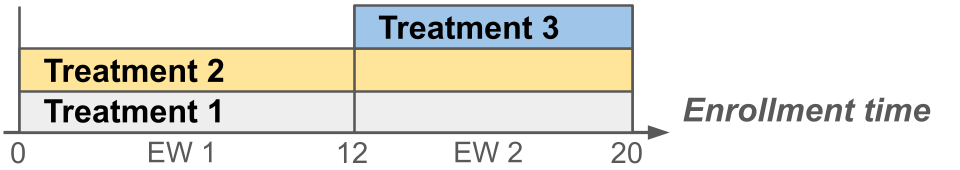}
    \caption{Platform trial design used in simulations.
        }
    \label{fig: simulation diagram}
\end{figure}

For each individual $i$, the enrollment time is $E_i \sim \textup{Unif}(0, 20)$, with $E_i \leq 12$ defining EW1 and $E_i > 12$ defining EW2. The primary estimand is the average treatment effect of Treatment~3 versus Treatment~1 over the ECE trial population, a superpopulation represented by individuals who enroll during EW2 when both treatments are simultaneously available. In the reference setting (Scenario 1), the data-generating process is $Y^{(1)}_i = 1 + \textup{expit}\left(E_i/2 - 6\right) + \epsilon_{i,1}$, $Y_i^{(2)}=\epsilon_{i,2}$, $Y^{(3)}_i = 5 + \left(E_i/10 - 1\right)^2 + \epsilon_{i,3}$, where $\epsilon_{i,1} \sim N(0, 4)$, $\epsilon_{i,2} \sim N(0, 1)$, $\epsilon_{i,3} \sim N(0, 1)$, and $\textup{expit}(z) = \{1 + \exp(-z)\}^{-1}$. The true value of $\delta_{31}$ is $3.58$. We also evaluate our method across four additional scenarios (Appendix~\ref{sec: additional scenarios}) that vary temporal trend smoothness, covariate effects, and outcome type.

We compare the proposed GP methods against several alternative approaches. We first describe methods that use concurrent data only, which serve as benchmarks for comparison, and then describe methods that can incorporate nonconcurrent data.
\begin{itemize}
  \item \textbf{ANOVA} (concurrent): mean difference $\bar{Y}_3 - \bar{Y}_1$ among concurrently enrolled participants.
  \item \textbf{ANHECOVA} (concurrent): a linear model adjusting for treatment-by-time interactions. Implemented via the \textsf{R} package \textsf{RobinCar} \citep{ye2021better, bannick2026robincar}.
  \item \textbf{LR-EW} (all): a linear model that includes a main effect for the enrollment window indicator to account for temporal differences \citep{Roig2025TreatmentControl}.
  \item \textbf{Bayesian Time Machine (TM)} (all): the Bayesian hierarchical model described in Section~\ref{sec: intro} \citep{saville_bayesian_2022}, implemented via the \textsf{R} package \textsf{NCC} with default settings \citep{NCC}.
\end{itemize}
To assess the impact of borrowing information from nonconcurrent controls, we also evaluate the GP method fitted using concurrent controls only, and compare it with the proposed GP model fitted using both concurrent and nonconcurrent controls.

All GP models use the squared exponential kernel. For the single-task model, the prior mean is a constant $m_a \sim N(0, \tau_a^2)$ with $\tau_a \sim N_+(1, 1)$, where the subscript $_+$ denotes truncation to positive values, the kernel variance parameter follows $\alpha^2 \sim N_+(1, 1)$, and the noise standard deviation $\sigma_a \sim N_+(0, 1)$. For the multi-task model, the noise standard deviations follow $\sigma_j, \sigma_k \sim N_+(0, 1)$, the kernel variance parameters follow $\alpha_f^2, \alpha_\Delta^2 \sim \textup{Exp}(1)$, and the constant treatment effect mean follows $\delta \sim N(0, 5^2)$. Across all GP models, the length-scale parameters follow data-adaptive inverse-gamma priors \citep{betancourt2020robustGP}, with shape and rate calibrated so that 98\% of the prior mass lies between a lower bound $\ell$ and an upper bound $u$, where $\ell = \max(1,\, q_{0.05})$, $u = 0.8 \times d_{\max}$, $q_{0.05}$ is the 5th percentile and $d_{\max}$ the maximum of all pairwise absolute differences among concurrent enrollment times.
Each simulation run includes $n = 200$ subjects (including those assigned to Treatment~2), and results are averaged across 1000 replicates. For Bayesian methods, MCMC sampling uses 2 chains of 2000 iterations each. Performance is evaluated by bias, standard deviation (SD), average standard error (SE), and empirical coverage probability (CP) at the 95\% level. For Bayesian methods, SE is the width of the 95\% credible interval divided by $2 \times 1.96$. For frequentist methods, SE is the square root of the estimated variance.

Figure \ref{fig:treatment_effect_multitask_GP} shows estimation results under one simulated dataset from the reference setting. The left panel displays the posterior mean functions of the control and treatment arms, with pointwise 95\% credible intervals, from the multi-task GP fitted using all data versus concurrent data only. The right panel shows the corresponding posterior mean and credible intervals for $\Delta$. Incorporating nonconcurrent controls narrows the credible intervals in both panels without introducing visible bias, reflecting successful borrowing of information across enrollment windows via temporal smoothness. A corresponding figure for the single-task GP is provided in the Supplementary Materials (Figure~\ref{fig: treatment effect single}) and shows a similar pattern.

\begin{figure}
    \centering
    \includegraphics[width=1\linewidth]{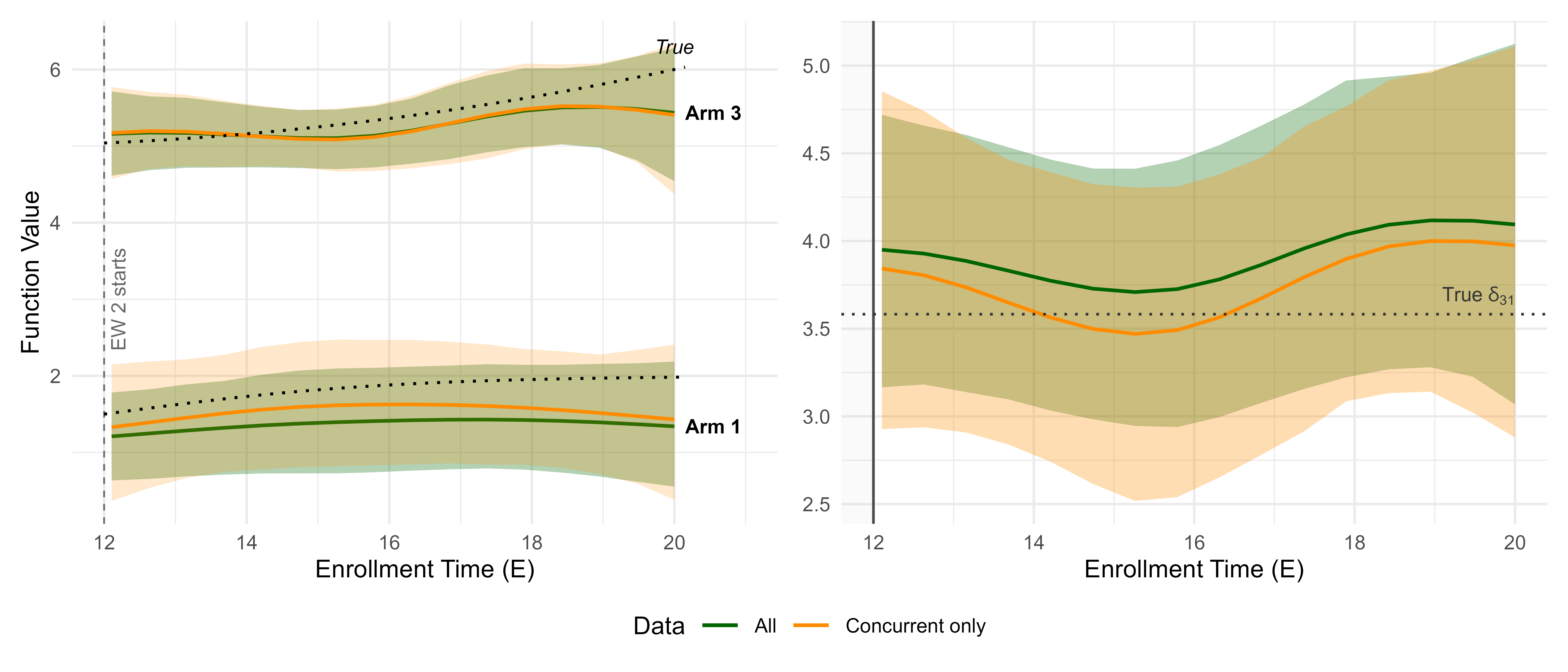}
    \caption{Estimation results under one simulated dataset from the reference setting. 
\textit{Left}: posterior mean functions of the control and treatment arms from the multi-task GP fitted using all data versus concurrent data only. \textit{Right}: posterior mean of $\Delta$. Shaded areas are pointwise 95\% credible intervals.}
\label{fig:treatment_effect_multitask_GP}
\end{figure} 

Table~\ref{tab:scenario1} presents results for Scenario~1 (reference). Results for Scenarios~2--5 are in Section~\ref{sec: additional scenarios} of the Supplementary Materials. In the reference setting, 
the concurrent-only benchmarks ANOVA and ANHECOVA have low bias and coverage near the nominal 95\% level (94.5\% and 93.1\%, respectively). The concurrent-only GP methods perform comparably. When incorporating nonconcurrent controls, the multi-task GP reduces SD from 0.429 to 0.357 relative to its concurrent-only counterpart, with a modest increase in bias (0.116 to 0.216) that leaves coverage near-nominal (94.4\%). Notably, the multi-task GP using all data achieves a smaller average standard error (0.395) than every concurrent-only method, including ANHECOVA (0.420), confirming an efficiency gain from incorporating nonconcurrent controls. It also achieves similar bias to the single-task GP and slightly lower variance.
Among methods that incorporate nonconcurrent controls, LR-EW exhibits bias (0.259) and below-nominal coverage (90.0\%), reflecting the inadequacy of adjusting for enrollment window alone to account for the nonlinear temporal trend. The Bayesian TM produces credible intervals an order of magnitude wider than those of every other method. This reflects known sensitivity of the Bayesian TM to its prior specification, particularly the priors governing the time-trend model \citep{Roig2025TreatmentControl}.

Results for the remaining scenarios (Section~\ref{sec: additional scenarios} of the Supplementary Materials) are broadly consistent. Overall, the GP-based estimators for incorporating nonconcurrent controls perform well across simulation settings. Under smooth temporal trends, they achieve small bias, near-nominal coverage, and meaningful efficiency gains relative to estimators that use only concurrent controls. Even in the most challenging setting with large and abrupt temporal shifts (Scenario 3), where the smooth GP prior is misspecified, the GP-based estimators maintain controlled bias and substantially outperform LR-EW (coverage 0.4\%) and Bayesian TM. Coverage does drop, to 90.2\% for the single-task GP and 82.8\% for the multi-task GP, reflecting the cost of working-model misspecification. We nonetheless recommend examining temporal trends, as they can signal elevated risk and less gain from borrowing nonconcurrent controls. In contrast, the competing methods for incorporating nonconcurrent controls exhibit less reliable behavior. The LR-EW estimator applies only a constant mean adjustment to nonconcurrent controls and incurs substantial bias when this assumption is violated. The Bayesian TM estimator shows unpredictable bias that can be quite large, and its standard deviation is markedly inflated relative to all other methods, including basic estimators that do not use nonconcurrent data at all. Its inference is also severely conservative, with coverage often reaching 100\%.

\begin{table}[ht]
\centering
\caption{Simulation results under Scenario 1 (reference setting). The true value of $\delta_{31}$ is  3.58.}
\label{tab:scenario1}
\resizebox{.7\linewidth}{!}{\begin{tabular}{llcccc}
\toprule
\textbf{Data} & \textbf{Method} & \textbf{Bias} & \textbf{SD} & \textbf{SE}  & \textbf{CP} \\
\midrule
\multirow{4}{*}{Concurrent only} 
& ANOVA            & 0.019 & 0.434 & 0.439 & 94.5 \\
& ANHECOVA         & 0.014 & 0.437 & 0.420 & 93.1 \\
& Single-task GP   & 0.092 & 0.433 & 0.441 & 94.8 \\
& Multi-task GP    & 0.116 & 0.429 & 0.434 & 94.6 \\
\midrule
\multirow{4}{*}{All data}
& Bayesian TM      & 0.101 & 0.741 & 4.895 & 100.0 \\
& LR-EW            & 0.259 & 0.335 & 0.357 & 90.0 \\
& Single-task GP   & 0.191 & 0.373 & 0.407 & 94.4 \\
& Multi-task GP    & 0.216 & 0.357 & 0.395 & 94.4 \\
\bottomrule
\end{tabular}}
\end{table}

\section{Real data application}\label{sec: real data}

We apply our GP methods to a hypothetical platform trial constructed from SURMOUNT-1~\citep{jastreboff2022tirzepatide}, a phase~3, double-blind, randomized controlled trial that enrolled 2539 adults with obesity (BMI $\geq 30$, or $\geq 27$ with at least one weight-related complication) across multiple countries. Participants were randomized 1:1:1:1 to once-weekly subcutaneous tirzepatide (TZP) at 5\,mg, 10\,mg, 15\,mg, or placebo for 72 weeks. The primary outcome was percent change in body weight from baseline to week~72.

To construct a hypothetical platform trial, we restrict to US sites and consider three arms: placebo, TZP~5\,mg, and TZP~10\,mg. Placebo and TZP~5\,mg begin enrolling from day~0, while TZP~10\,mg is added from day~170, corresponding to June~4, 2020, when enrollment resumed substantially after a slowdown due to COVID-19. We define the nonconcurrent control window as days~0--170, retaining only placebo and TZP~5\,mg participants, and the concurrent window as days~170+, retaining all three arms. This yields a total of 349 participants: 231 nonconcurrent placebo participants, 57 concurrent placebo participants, and 61 concurrent TZP~10\,mg participants. The estimand is the average treatment effect of TZP~10\,mg versus placebo in the ECE trial population. The 60 concurrent TZP~5\,mg participants are also concurrently eligible and contribute their enrollment times to the estimation, though their outcomes do not enter the analysis.

The primary outcome was missing for some participants, primarily due to treatment discontinuation. In the resulting hypothetical platform trial, 86 of 349 participants (24.6\%) had missing week-72 outcomes, with missingness highest among concurrent placebo participants (40.4\%), compared with 24.7\% for nonconcurrent placebo and 9.8\% for concurrent TZP~10\,mg. Following the pre-specified analysis of SURMOUNT-1~\citep{jastreboff2022tirzepatide}, we performed multiple imputation on the full trial. For COVID-19-related discontinuations ($n=16$), which are assumed missing at random, we fitted a mixed-model for repeated measures (MMRM) using longitudinal weight data (Weeks 4--72) from all subjects in the full trial excluding non-COVID treatment discontinuers to predict Week~72 outcomes for the 16 COVID dropouts. For non-COVID treatment discontinuations ($n=313$), which are handled under a retrieved-dropout framework, we applied predictive mean matching via \texttt{mice} separately within each treatment arm. 
This produced 50 multiply imputed complete datasets, from which the hypothetical platform trial was constructed. 
Results are pooled via Rubin's rules~\citep{Rubin1987MI}. We report both $\sqrt{W}$, the square root of the within-imputation variance (i.e., the average estimated variance across imputed datasets, reflecting sampling variability alone), and the total Rubin-pooled standard error.

For concurrent-only analyses, we apply ANOVA, ANHECOVA, and the single-task and multi-task GP models. For all-data analyses, we apply LR-EW, Bayesian TM, and the GP models. ANHECOVA adjusts for sex, presence of prediabetes, and baseline body weight. Results are reported in Table~\ref{tab:surmount1}. 

The concurrent-only methods agree closely. ANOVA gives $-18.6$ (SE~$2.42$), ANHECOVA $-18.8$ (SE~$2.36$), single-task GP $-18.6$ (SE~$2.45$) and multi-task GP $-18.8$ (SE~$2.41$), all consistent with the estimate in~\citet{jastreboff2022tirzepatide}. When incorporating nonconcurrent controls, the GP methods reduce within-imputation variance by 12\%--16\% relative to their concurrent-only counterparts while point estimates remain stable near $-18.8$, suggesting no meaningful bias from borrowing nonconcurrent controls. LR-EW assumes a constant shift between enrollment windows, yields a similar point estimate, and achieves $\sqrt{W} = 1.62$ (SE~$2.09$), the lowest within-imputation standard error among all methods. This strong performance likely reflects the absence of a meaningful temporal trend in this dataset (see Figure~\ref{fig: placebo temporal trend} in the Supplementary Materials). When a trend is present and a constant shift is insufficient to capture it, LR-EW can be severely biased, as shown in the simulations. 
The Bayesian TM yields a point estimate noticeably lower than those of the other methods and with more than doubled uncertainty, consistent with the prior sensitivity behavior discussed in Section~\ref{sec: simulation}. 

\begin{table}[htbp]
\centering
\caption{Estimated average treatment effect of TZP~10\,mg versus placebo on percent weight change at week~72, based on the hypothetical platform trial constructed from SURMOUNT-1. Results are pooled over $50$ imputations via Rubin's rules~\citep{Rubin1987MI}. $\sqrt{W}$: square root of within-imputation variance, reflecting sampling variability alone; SE: Rubin-pooled standard error, additionally accounting for missing-data uncertainty. All quantities are in percentage points.}
\label{tab:surmount1}
\resizebox{.8\linewidth}{!}{\begin{tabular}{@{}llcccc@{}}
\toprule
Data & Estimator & Estimate & $\sqrt{W}$ & SE & 95\% CI \\
\midrule
\multirow{4}{*}{Concurrent only ($n=118$)}
  & ANOVA        & $-18.6$ & $2.02$ & $2.42$ & $(-23.4,\ -13.9)$ \\
  & ANHECOVA                   & $-18.8$ & $1.94$ & $2.36$ & $(-23.4,\ -14.2)$ \\
  & Single-task GP             & $-18.6$ & $2.06$ & $2.45$ & $(-23.4,\ -13.8)$ \\
  & Multi-task GP              & $-18.8$ & $2.04$ & $2.41$ & $(-23.5,\ -14.1)$ \\
\midrule
\multirow{4}{*}{All data ($n=349$)}
  & LR-EW                      & $-18.6$ & $1.62$ & $2.09$ & $(-22.7,\ -14.5)$ \\
  & Bayesian TM                & $-11.6$ & $5.16$ & $5.19$ & $(-21.7,\ -1.4)$ \\
  & Single-task GP             & $-18.8$ & $1.89$ & $2.22$ & $(-23.1,\ -14.4)$ \\
  & Multi-task GP              & $-18.8$ & $1.91$ & $2.25$ & $(-23.2,\ -14.4)$ \\
\bottomrule
\end{tabular}}
\end{table}

\section{Discussion}

The GP models used throughout this paper require $O(n^3)$ computational cost for Cholesky decomposition, which may become a bottleneck in large platform trials such as the RECOVERY trial \citep{recovery2021dexamethasone}. However, scalable approximations are readily available. The Hilbert space approximate GP (HSGP) \citep{riutort2023practical} replaces the covariance matrix with a low-rank basis function expansion and can be applied directly to our single-task model. Using the \textsf{brms} package \citep{burkner2017brms} with $M=40$ basis functions, we repeated the 1000-replicate simulation under Scenario~1. The HSGP approximation produced nearly identical results to the exact single-task GP (bias 0.190 vs 0.191, SD 0.369 vs 0.373, 95\% CP 94.1\% vs 94.4\%) while reducing median computation time from 224 seconds to 6 seconds per replicate, a 37-fold speedup. Since HSGP scales as $O(nM)$ rather than $O(n^3)$, this approach extends the proposed method to large trials. The same basis function strategy can be applied to the multi-task GP model.

\bibliography{GP}

\clearpage

\appendix

\begin{center}
    {\LARGE\bf Supplementary Material for \\ 
``Robust and Data-Adaptive Integration of Nonconcurrent Data in Platform Trials via Gaussian Processes''} 
\end{center}

\setcounter{figure}{0}
\setcounter{equation}{0}
\setcounter{table}{0}
\setcounter{lemma}{0}
\setcounter{section}{0}
\renewcommand{\theequation}{S\arabic{equation}}
\renewcommand{\thetable}{S\arabic{table}}
\renewcommand{\thefigure}{S\arabic{figure}}
\renewcommand{\thesection}{S\arabic{section}}
\renewcommand{\thelemma}{S\arabic{lemma}}
\renewcommand{\thecondition}{S\arabic{condition}}

\section{Additional Figures}
\begin{figure}[!htbp]
    \centering
    \includegraphics[width=1\linewidth]{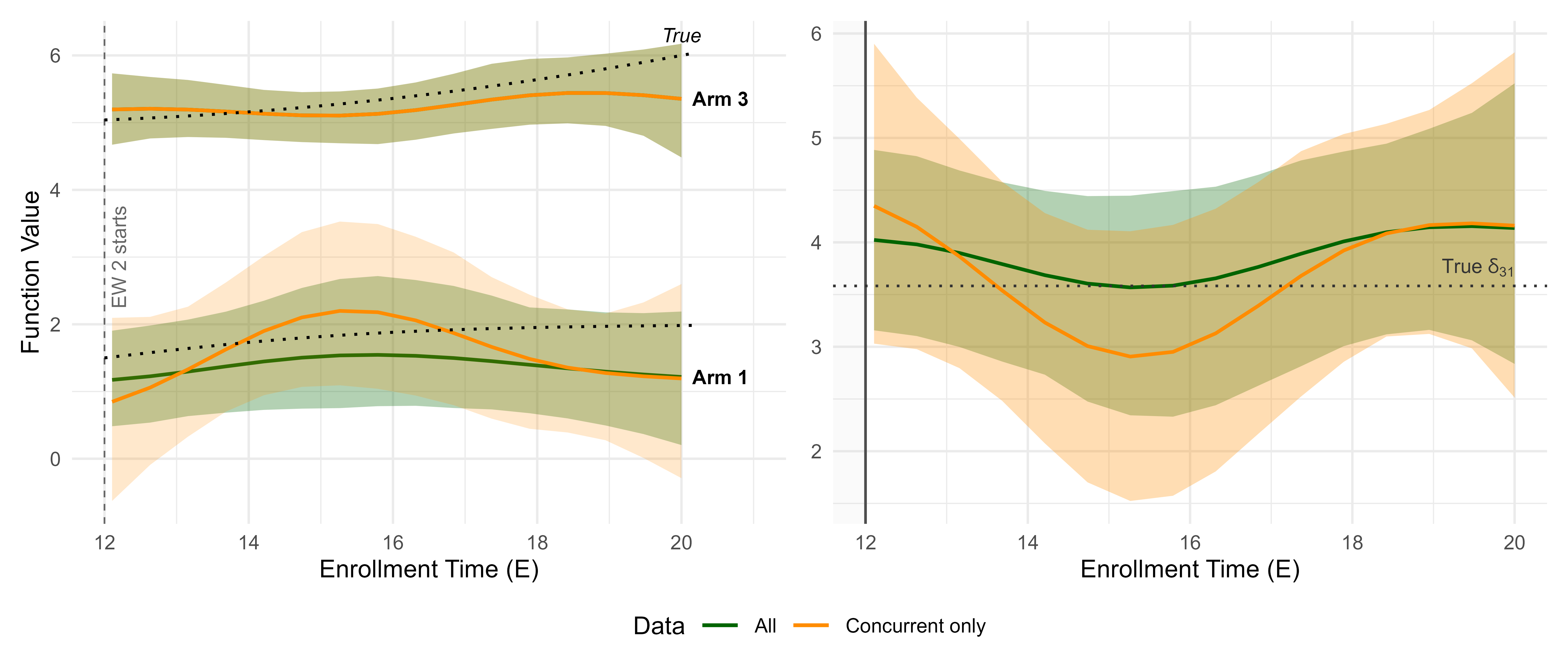}
    \caption{{Estimation results from the single-task GP under one simulated dataset from the reference setting, following the layout of Figure~\ref{fig:treatment_effect_multitask_GP}. As in the multi-task case, incorporating nonconcurrent controls narrows the credible intervals without visible bias, reflecting successful borrowing of information across enrollment windows via temporal smoothness.}}
    \label{fig: treatment effect single}
\end{figure}

\begin{figure}[!htbp]
    \centering
    \includegraphics[width=0.85\linewidth]{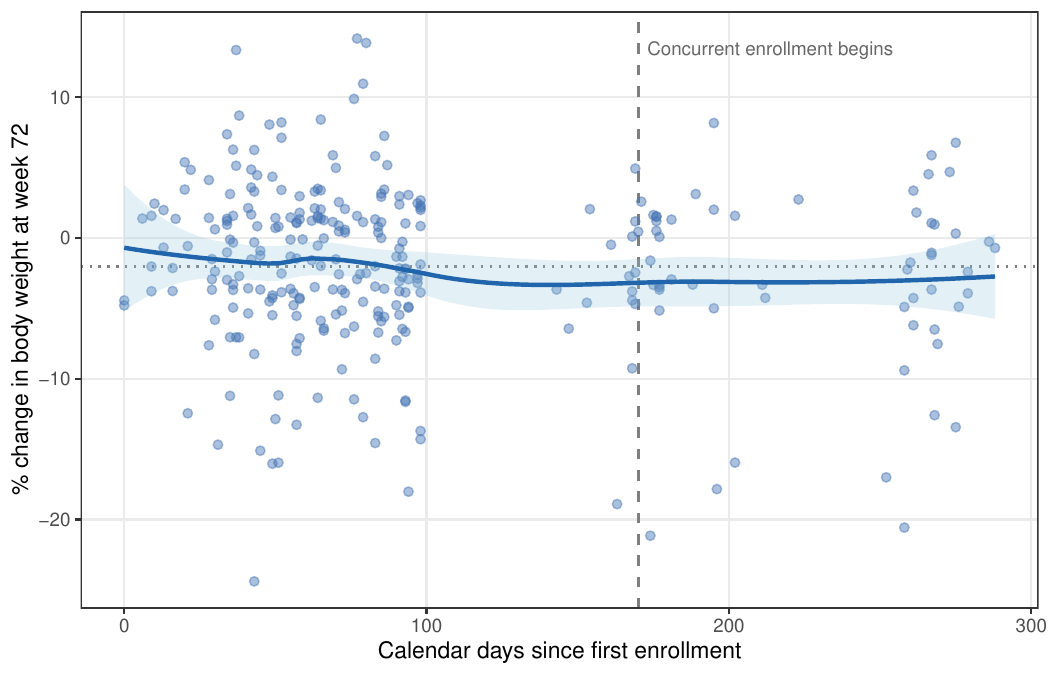}
    \caption{Placebo percent weight change at week~72 versus enrollment time (calendar days since first enrollment) in the hypothetical platform trial in Section \ref{sec: real data}. Each point represents a single placebo participant, with outcomes averaged across $50$ multiply imputed datasets. The solid curve is a LOESS smooth with a 95\% confidence band; the horizontal dotted line marks the overall placebo mean ($-2.0\%$). The dashed vertical line indicates when concurrent enrollment begins (day~170). The LOESS smooth remains approximately flat across the entire enrollment period, with the nonconcurrent mean ($-1.8\%$) and concurrent mean ($-2.8\%$) being nearly equal, reflecting the absence of a meaningful temporal trend in this dataset.}
    \label{fig: placebo temporal trend}
\end{figure}

\section{Additional Simulation Scenarios and Results}\label{sec: additional scenarios}
Scenarios~2--5 each modify some aspects of the reference setting (Scenario~1). The true value of $\delta_{31}$ is computed analytically where possible, or estimated via Monte Carlo with $n = 10^7$. Scenario~2 has $\delta_{31} = 4$; Scenario~3 has $\delta_{31} \approx 3.58$; Scenario~4 has $\delta_{31} \approx 3.38$; and Scenario~5 has $\delta_{31} \approx 0.12$. Note that Scenarios~2--4 have the low-noise specification $\epsilon_{i,1}\sim N(0,1)$ and $\epsilon_{i,3}\sim N(0,1/4)$.

\begin{enumerate}
\setcounter{enumi}{1}
    \item No temporal trend: $Y_i^{(1)}=1+\epsilon_{i,1}$ and $Y_i^{(3)}=5+\epsilon_{i,3}$.
    \item Nonsmooth temporal trend in Treatment~1: $Y_i^{(1)}=1 + \textup{expit}(E_i/2-6)-3\cdot\mathbb{I}(E_i<12)+\epsilon_{i,1}$; $Y_i^{(3)}=5+(E_i/10-1)^2+\epsilon_{i,3}$.
    \item Additional covariates: $Y_i^{(1)}=1+\textup{expit}(E_i/2-6) + X_{i,b}X_{i,c}/2 + \epsilon_{i,1}$, $Y_i^{(3)}=5+ (E_i/10-1)^2 +X_{i,c}/4 - X_{i,b}/2+ \epsilon_{i,3}$, where $X_{i,c} \sim N(0, 1/4)$, and $X_{i,b} \sim \textup{Bernoulli}(0.4)$. $X_{i,b}$ and $X_{i,c}$ are generated independently of $E_i$.
    \item Binary outcomes: $\mbb E(Y_i^{(1)}\mid E_i)=\textup{expit}(-1 + E_i/20)$ and $\mbb E(Y_i^{(3)}\mid E_i) =\textup{expit}(-0.5 + E_i/20)$.
\end{enumerate}

In Scenario~4, ANHECOVA adjusts only for treatment-by-time interactions and does not use the prognostic covariates $X_{i,b}$ and $X_{i,c}$, keeping the comparator identical across all scenarios. Within the GP framework, the comparison between the base GP and the linear-plus-kernel variant isolates the value of explicit covariate adjustment. For the GP models, covariates are standardized to zero mean and unit variance. The mean-function covariate coefficients follow $N(0, 1)$. The kernel is augmented with an additive squared exponential covariate kernel using automatic relevance determination, whose per-dimension length-scales follow data-adaptive inverse-gamma priors computed from all pairwise Euclidean distances among the standardized covariate vectors and whose variance parameter follows $\textup{Exp}(1)$. We evaluate two GP variants: the base model with a constant mean function and time-only kernel and a ``linear~+~kernel'' variant that combines the linear mean function with an additive ARD covariate kernel (Section~\ref{subsec: covariates}).

Tables~\ref{tab:scenario4}--\ref{tab:scenario8} present results for Scenarios~2--5. With no temporal trend (Scenario~2), all methods that incorporate nonconcurrent data (except for Bayesian TM) perform well; LR-EW achieves the lowest variance. Scenario~3 is the most challenging: the jump discontinuity at $E=12$ violates the smoothness assumption, and multi-task GP coverage drops to 82.8\% while single-task GP is more robust at 90.2\%; LR-EW coverage falls to 0.4\%. With additional covariates (Scenario~4), the linear-plus-kernel GP variants perform similarly to the unadjusted GP, while ANHECOVA, which uses only treatment-by-time adjustment, consistent with all other scenarios, provides an unadjusted-covariate benchmark. For binary outcomes (Scenario~5), the GP methods maintain near-nominal coverage with similar variance reductions.

\begin{table}[ht]
\centering
\caption{Simulation results under Scenario~2 (no temporal trend). The true value of $\delta_{31}$ is $4$.}
\label{tab:scenario4}
\begin{tabular}{llccccc}
\toprule
\textbf{Data} & \textbf{Method} & \textbf{Bias} & \textbf{SD} & \textbf{SE} & \textbf{CP} \\
\midrule
\multirow{4}{*}{Concurrent only}
& ANOVA & 0.009 & 0.216 & 0.217 & 94.0 \\
& ANHECOVA         & 0.009 & 0.218 & 0.210 & 92.6 \\
& Single-task GP   & 0.033 & 0.218 & 0.233 & 95.7 \\
& Multi-task GP    & 0.038 & 0.217 & 0.231 & 95.3 \\
\midrule
\multirow{4}{*}{All data}
& Bayesian TM      & -0.399 & 1.308 & 4.631 & 100.0 \\
& LR-EW            & 0.005 & 0.167 & 0.191 & 97.5 \\
& Single-task GP   & 0.011 & 0.172 & 0.197 & 97.5 \\
& Multi-task GP    & 0.017 & 0.174 & 0.200 & 97.4 \\
\bottomrule
\end{tabular}
\end{table}

\begin{table}[ht]
\centering
\caption{Simulation results under Scenario~3 (nonsmooth temporal trend). The true value of $\delta_{31}$ is $3.58$.}
\label{tab:scenario6}
\begin{tabular}{llccccc}
\toprule
\textbf{Data} & \textbf{Method} & \textbf{Bias} & \textbf{SD} & \textbf{SE} & \textbf{CP} \\
\midrule
\multirow{4}{*}{Concurrent only}
& ANOVA & 0.010 & 0.222 & 0.226 & 94.6 \\
& ANHECOVA         & 0.005 & 0.219 & 0.211 & 93.3 \\
& Single-task GP   & 0.037 & 0.222 & 0.239 & 95.4 \\
& Multi-task GP    & 0.039 & 0.219 & 0.234 & 95.7 \\
\midrule
\multirow{4}{*}{All data}
& Bayesian TM      & 0.187 & 0.946 & 4.829 & 100.0 \\
& LR-EW            & 1.291 & 0.268 & 0.279 & 0.4 \\
& Single-task GP   & 0.203 & 0.229 & 0.251 & 90.2 \\
& Multi-task GP    & 0.273 & 0.231 & 0.247 & 82.8 \\
\bottomrule
\end{tabular}
\end{table}

\begin{table}[ht]
\centering
\caption{Simulation results under Scenario~4 (additional covariates). The true value of $\delta_{31}$ is $3.38$.}
\label{tab:scenario7}
\begin{tabular}{llccccc}
\toprule
\textbf{Data} & \textbf{Method} & \textbf{Bias} & \textbf{SD} & \textbf{SE} & \textbf{CP} \\
\midrule
\multirow{6}{*}{Concurrent only}
& ANOVA & -0.011 & 0.243 & 0.236 & 94.4 \\
& ANHECOVA         & -0.013 & 0.238 & 0.219 & 91.5 \\
& Single-task GP   & 0.012 & 0.240 & 0.247 & 95.0 \\
& Single-task GP (linear + kernel) & 0.021 & 0.245 & 0.264 & 96.3 \\
& Multi-task GP    & 0.019 & 0.239 & 0.244 & 95.2 \\
& Multi-task GP (linear + kernel) & 0.025 & 0.240 & 0.249 & 95.1 \\
\midrule
\multirow{6}{*}{All data}
& Bayesian TM      & 0.202 & 1.242 & 4.681 & 100.0 \\
& LR-EW            & 0.243 & 0.194 & 0.202 & 78.2 \\
& Single-task GP   & 0.079 & 0.235 & 0.235 & 93.6 \\
& Single-task GP (linear + kernel) & 0.084 & 0.232 & 0.239 & 93.1 \\
& Multi-task GP    & 0.079 & 0.224 & 0.228 & 93.9 \\
& Multi-task GP (linear + kernel) & 0.084 & 0.222 & 0.230 & 93.5 \\
\bottomrule
\end{tabular}
\end{table}

\begin{table}[ht]
\centering
\caption{Simulation results under Scenario~5 (binary outcomes). The true value of $\delta_{31}$ is $0.12$.}
\label{tab:scenario8}
\begin{tabular}{llccccc}
\toprule
\textbf{Data} & \textbf{Method} & \textbf{Bias} & \textbf{SD} & \textbf{SE} & \textbf{CP} \\
\midrule
\multirow{4}{*}{Concurrent only}
& ANOVA & -0.001 & 0.138 & 0.138 & 94.2 \\
& ANHECOVA         & -0.003 & 0.140 & 0.133 & 92.9 \\
& Single-task GP   & -0.004 & 0.139 & 0.131 & 93.5 \\
& Multi-task GP    & -0.007 & 0.137 & 0.130 & 93.2 \\
\midrule
\multirow{4}{*}{All data}
& Bayesian TM      & 0.141 & 1.683 & 4.521 & 100.0 \\
& LR-EW            & 0.037 & 0.125 & 0.124 & 93.4 \\
& Single-task GP   & 0.004 & 0.131 & 0.127 & 94.0 \\
& Multi-task GP    & 0.002 & 0.129 & 0.125 & 94.0 \\
\bottomrule
\end{tabular}
\end{table}

\clearpage

\section{Proofs}
We repeatedly use the following two matrix facts.

\textbf{Fact 1.} If $B$ is positive semi-definite, then for any matrix $A$, $A^T BA$ is positive semi-definite.

\emph{Proof.} For any conformable vector $\bm x$, $(A\bm x)^T BA\bm x\geq0$.

\textbf{Fact 2.} If $B\succeq A\succ 0$, then $A^{-1}\succeq B^{-1}$.

\emph{Proof.} $B\succeq A\implies A^{-1/2}BA^{-1/2}\succeq \bm I\implies A^{-1}\succeq B^{-1}.$

\subsection{Proof of Theorem~\ref{theorem: variance reduction}(i)}
For simplicity, we suppress the subscript $a$ in the kernel $k_a$, posterior covariance function $\bar k_a$, and noise variance $\sigma_a^2$. Let $\bm E$ denote the vector of concurrent enrollment times for arm $a$ (the concurrent subvector of $\bm E_a$), and $\bm E'$ the vector of nonconcurrent enrollment times for arm $a$, of size $n'$. Then,
\begin{align*}
    &\quad \bar k_{\textup{cc}}(\bm e, \bm e)-\bar k(\bm e, \bm e)\\
    &=k(\bm e,(\bm E^T,(\bm E')^T)^T)\begin{pmatrix}
        k(\bm E,\bm E)+\sigma^2\bm I & k(\bm E,\bm E')\\
        k(\bm E',\bm E) & k(\bm E',\bm E')+\sigma^2\bm I
    \end{pmatrix}^{-1}
        k((\bm E^T,(\bm E')^T)^T,\bm e) -k(\bm e,\bm E)(k(\bm E,\bm E)+\sigma^2\bm I)^{-1}k(\bm E,\bm e)\\
    &=k(\bm e,(\bm E^T,(\bm E')^T)^T)\left\{{\underbrace{\begin{pmatrix}
        k(\bm E,\bm E)+\sigma^2\bm I & k(\bm E,\bm E')\\
        k(\bm E',\bm E) & k(\bm E',\bm E')+\sigma^2\bm I
    \end{pmatrix}}_{M}}^{-1}-\underbrace{\begin{pmatrix}
        (k(\bm E,\bm E)+\sigma^2\bm I)^{-1} & \mathbf{0}\\
       \mathbf{0} & \mathbf{0}
    \end{pmatrix}}_N\right\}k((\bm E^T,(\bm E')^T)^T,\bm e)
\end{align*}
By the Schur complement formula and the invertibility of $M_{11}$, we have
\begin{align*}
    &\quad M^{-1}-N\\
    &=\begin{pmatrix}
        M_{11}^{-1}M_{12}S_{11}^{-1}M_{21}M_{11}^{-1} & -M_{11}^{-1}M_{12}S_{11}^{-1}\\
        -S_{11}^{-1}M_{21}M_{11}^{-1} & S_{11}^{-1}
    \end{pmatrix}\\
    &=\begin{pmatrix}
        -M_{11}^{-1}M_{12}\\
        \bm I_{n'}
    \end{pmatrix}S_{11}^{-1}\begin{pmatrix}
        -M_{21}M_{11}^{-1} &
        \bm I_{n'}
    \end{pmatrix}\\
    &=\begin{pmatrix}
        -M_{11}^{-1}M_{12}\\
        \bm I_{n'}
    \end{pmatrix}S_{11}^{-1}\begin{pmatrix}
        -M_{11}^{-1}M_{12}\\
        \bm I_{n'}
    \end{pmatrix}^T
\end{align*}
where $S_{11}=M/M_{11}=M_{22}-M_{21}M_{11}^{-1}M_{12}$ is the Schur complement, which is positive definite because $M$ is a kernel Gram matrix plus $\sigma^2 \bm I$ and hence positive definite.

Therefore, $M^{-1}-N$ is positive semi-definite, and by Fact~1, $\bar k_{\textup{cc}}(\bm e, \bm e) - \bar k(\bm e, \bm e)$ is positive semi-definite.

\subsection{Proof of Theorem~\ref{theorem: variance reduction}(ii)}
Some calculations show that
\(
    \Delta(\bm e)\mid \bm Y,\bm E\sim N(\bar m_\Delta(\bm e),\bar {k}_\Delta(\bm e, \bm e)),
\)
with 
\begin{align*}
    \bar m_\Delta(\bm e)&=\delta + k_{\Delta}(\bm e, \bm{E}_j) B_1^{-1} \left\{(\bm{Y}_j - m(\bm{E}_j)-\delta\bm 1_{n_j}) + B_2 (\bm{Y}_{k} - m(\bm{E}_k))\right\}\\
    \bar{k}_\Delta(\bm e, \bm e)&=k_{\Delta}(\bm e,\bm e)-k_{\Delta}(\bm e,\bm E_j)B_1^{-1} k_{\Delta}(\bm E_j,\bm e),
\end{align*}
where 
\begin{align*}
    B_1&= k_{f,jj}+k_{\Delta, jj}+\sigma_j^2\bm I_{n_j}-k_{f,jk}(k_{f,kk}+\sigma_{k}^2\bm I_{n_k})^{-1} k_{f,kj}\\
    B_2&=-k_{f,jk}(k_{f,kk}+\sigma_k^2\bm I_{n_k})^{-1}.
\end{align*}
Suppose we incorporate nonconcurrent controls $\bm E_{k'}$ with $\bm E_k$ to construct $(\bm E_{k},\bm E_{k'})$ (i.e., do not include extra concurrent controls or any treatment data). The efficiency gain is
\begin{align*}
    \bar{k}_{\Delta,\textup{cc}}(\bm e, \bm e)-\bar{k}_\Delta(\bm e, \bm e)=k_{\Delta}(\bm e,\bm E_j)\left(B_1^{-1}-B_{1,\textup{cc}}^{-1}\right)k_{\Delta}(\bm E_j,\bm e).
\end{align*}
To show the last matrix is positive semi-definite, it suffices to show $B_{1,\textup{cc}}-B_1$ is positive semi-definite. Note that $B_1$ is the Schur complement of $k_{f,kk}+\sigma_k^2\bm I_{n_k}$ in the positive definite joint covariance matrix of $(\bm{Y}_j^T,\bm{Y}_k^T)^T$. Hence $B_1\succ 0$ in both the concurrent-only and full-data settings, and Fact~2 applies. Let the vector of control enrollment times used in the full-data setting be $(\bm E_k,\bm E_{k'})$.
\begin{align*}
    &\quad B_{1,\textup{cc}}-B_1\\
    &=-k_{f,jk}(k_{f,kk}+\sigma_{k}^2\bm I_{n_k})^{-1}k_{f,kj}+\begin{pmatrix}
        k_{f,kj}\\
        k_{f,k'j}
    \end{pmatrix}^T\begin{pmatrix}
        k_{f,kk} + \sigma_k^2\bm I_{n_k} & k_{f,kk'}\\
        k_{f,k'k} & k_{f,k'k'}+\sigma_k^2\bm I_{n_{k'}}
    \end{pmatrix}^{-1}\begin{pmatrix}
        k_{f,kj}\\
        k_{f,k'j}
    \end{pmatrix}\\
    &=\begin{pmatrix}
        k_{f,kj}\\
        k_{f,k'j}
    \end{pmatrix}^T\left[{\underbrace{\begin{pmatrix}
        k_{f,kk} + \sigma_k^2\bm I_{n_k} & k_{f,kk'}\\
        k_{f,k'k} & k_{f,k'k'}+\sigma_k^2\bm I_{n_{k'}}
    \end{pmatrix}}_{M}}^{-1}-\underbrace{\begin{pmatrix}
        (k_{f,kk} + \sigma_k^2\bm I_{n_k})^{-1} & 0\\
     0  & 0
    \end{pmatrix}}_{N}\right]\begin{pmatrix}
        k_{f,kj}\\
        k_{f,k'j}
    \end{pmatrix}
\end{align*}
By the Schur complement formula and the invertibility of $M_{11}$, we have that
\begin{align*}
    &\quad M^{-1}-N\\
    &=\begin{pmatrix}
        M_{11}^{-1}M_{12}S_{11}^{-1}M_{21}M_{11}^{-1} & -M_{11}^{-1}M_{12}S_{11}^{-1}\\
        -S_{11}^{-1}M_{21}M_{11}^{-1} & S_{11}^{-1}
    \end{pmatrix}\\
    &=\begin{pmatrix}
        -M_{11}^{-1}M_{12}\\
        \bm I_{n_{k'}}
    \end{pmatrix}S_{11}^{-1}\begin{pmatrix}
        -M_{21}M_{11}^{-1} &
        \bm I_{n_{k'}}
    \end{pmatrix}\\
    &=\begin{pmatrix}
        -M_{11}^{-1}M_{12}\\
        \bm I_{n_{k'}}
    \end{pmatrix}S_{11}^{-1}\begin{pmatrix}
        -M_{11}^{-1}M_{12}\\
        \bm I_{n_{k'}}
    \end{pmatrix}^T
\end{align*}
where $S_{11}=M/M_{11}=M_{22}-M_{21}M_{11}^{-1}M_{12}$ is the Schur complement, which is positive definite because $M=k_f((\bm{E}_k,\bm{E}_{k'}),(\bm{E}_k,\bm{E}_{k'}))+\sigma_k^2 \bm I_{n_k+n_{k'}}$ is positive definite.

Therefore, $M^{-1}-N$ is positive semi-definite. By Fact~1, $B_{1,\textup{cc}} - B_1$ is positive semi-definite, completing the proof.

\subsection{Proof of Theorem~\ref{theorem: bias bound}}
Throughout, all statements are conditional on $(\bm E_j,\bm E_k,(E_i)_{i\in\mathcal{I}_{jk}})$, and all kernel hyperparameters are treated as fixed. Let $\bm Y = (\bm Y_j^T, \bm Y_k^T)^T$ and $\bm E = (\bm E_j^T, \bm E_k^T)^T$ denote the stacked outcome and enrollment-time vectors. We use $F_{\textup{Sch}} := k_{f,jj} - k_{f,jk}(k_{f,kk}+\sigma_k^2 \bm I_{n_k})^{-1} k_{f,kj}$ for the Schur complement that appears in the decomposition $B_1 = k_{\Delta,jj} + \sigma_j^2 \bm I_{n_j} + F_{\textup{Sch}}$.

Recall that $\hat\psi = B^{-1}\sum_{b=1}^B \hat\delta_{jk}^{(b)}$, where
$\hat\delta_{jk}^{(b)} = \sum_{i\in\mathcal{I}_{jk}} p_i^{(b)}\,
\hat\Delta^{(b)}(E_i)$, the $E_i$ are the observed enrollment times of the
$n_{jk}$ concurrently eligible individuals,
$(p_i^{(b)})_{i\in\mathcal{I}_{jk}} \sim \textup{Dir}(1,\dots,1)$, and
$\hat\Delta^{(b)}(\cdot)$ are posterior draws of $\Delta(\cdot)$. Denote the
observed data $\{\bm Y,\bm E,(E_i)_{i\in\mathcal{I}_{jk}}\}$ by $\mathcal{D}$. By construction, $\mathbb{E}[p_i^{(b)}] = 1/n_{jk}$, the posterior draws are conditionally independent and identically distributed given $\mathcal{D}$, and the Dirichlet weights are independent of the posterior draws given $\mathcal{D}$. Therefore
\[
\mathbb E[\hat\psi\mid \mathcal{D}]
= \frac{1}{n_{jk}}\sum_{i\in\mathcal{I}_{jk}}
\bar m_\Delta(E_i)
= \mathbb{E}_{F_n}[\bar m_\Delta(E)].
\]
Substituting the posterior mean
from~\eqref{eq: multi-GP posterior mean and covariance} gives
\[
\mathbb E[\hat\psi\mid \mathcal{D}]
= \delta + \mu_n(\bm E_j) B_1^{-1}
\left\{(\bm Y_j-m(\bm E_j)-\delta \bm 1_{n_j})
+ B_2(\bm Y_k-m(\bm E_k))\right\},
\]
where $\mu_n(\bm E_j)
= n_{jk}^{-1}\sum_{i\in\mathcal{I}_{jk}} k_\Delta(E_i, \bm E_j)
\in \mathbb{R}^{1\times n_j}$.
By the consistency assumption and Assumption~\ref{assump:dgp},
\[
\mathbb E[\bm Y_j\mid \bm E_j]=f_0(\bm E_j)+\Delta_0(\bm E_j),\qquad
\mathbb E[\bm Y_k\mid \bm E_k]=f_0(\bm E_k).
\]
Taking the expectation over $\bm Y$ given $(\bm E_j, \bm E_k)$ by the property of total expectation, and subtracting the sample-level estimand
$\tilde\delta_{jk} = \mathbb{E}_{F_n}[\Delta_0(E)]$, the estimation bias is
\begin{align*}
&\mathbb{E}[\hat\psi \mid \bm E_j, \bm E_k, (E_i)_{i\in\mathcal{I}_{jk}}] - \tilde\delta_{jk} \\
&\quad= \delta + \mu_n(\bm E_j)B_1^{-1}
\bigl\{(\Delta_0(\bm E_j)-\delta \bm 1_{n_j})
+ (f_0(\bm E_j)-m(\bm E_j))
+ B_2(f_0(\bm E_k)-m(\bm E_k))\bigr\}
- \tilde\delta_{jk}.
\end{align*}
With $\bm h := \Delta_0(\bm E_j) - \delta\, \bm 1_{n_j}$ and
$\bm g := f_0(\bm E_j)-m(\bm E_j)+B_2(f_0(\bm E_k)-m(\bm E_k))$, we
decompose the estimation bias as
\[
\mathbb{E}[\hat\psi \mid \bm E_j, \bm E_k, (E_i)_{i\in\mathcal{I}_{jk}}] - \tilde\delta_{jk}
= \underbrace{(\delta - \tilde\delta_{jk})
  + \mu_n(\bm E_j)\, B_1^{-1}\, \bm h}_{
  A:\;\text{treatment effect}}
+ \underbrace{\mu_n(\bm E_j)\, B_1^{-1}\, \bm g}_{
  B:\;\text{baseline}}.
\]
We note the following special cases:
\begin{enumerate}
    \item If the prior mean of $f$ is correctly specified, i.e., $m=f_0$, then the baseline deviation vanishes:
    \[\mathbb{E}[\hat\psi \mid \bm E_j, \bm E_k, (E_i)_{i\in\mathcal{I}_{jk}}] - \tilde\delta_{jk}
    = (\delta - \tilde\delta_{jk})
    +\mu_n(\bm E_j)B_1^{-1}\bm h.\]
    \item If the prior mean of $f$ is correctly specified and the true underlying treatment effect function is time-invariant, i.e., $\Delta_0\equiv\delta_{jk}$, then
    $$\mathbb{E}[\hat\psi \mid \bm E_j, \bm E_k, (E_i)_{i\in\mathcal{I}_{jk}}] - \tilde\delta_{jk}
    =(\delta-\delta_{jk})(1-\mu_n(\bm E_j)B_1^{-1}\bm 1_{n_j}).$$
    \item If additionally $\delta=\delta_{jk}$, then
    $$\mathbb{E}[\hat\psi \mid \bm E_j, \bm E_k, (E_i)_{i\in\mathcal{I}_{jk}}] -\tilde\delta_{jk}= 0.$$
\end{enumerate}

We now establish the general upper bound.

\textbf{Treatment effect term.} Define $h = \Delta_0 - \delta$, so that
$\bm h = h(\bm E_j)$, and the pointwise interpolation error
\[
b_h(e) = \sum_{i=1}^{n_j} w_i(e)\, h(E_{j,i}) - h(e), \qquad
w_i(e) = [B_1^{-1} k_\Delta(\bm E_j, e)]_i.
\]
We verify that $A = \mathbb{E}_{F_n}[b_h(E)]$. Indeed, writing
$b_h(e) = k_\Delta(e,\bm E_j)\,B_1^{-1}\bm h - h(e)$ and averaging
under $F_n$,
\begin{align*}
\mathbb{E}_{F_n}[b_h(E)]
&= \mu_n(\bm E_j)\,B_1^{-1}\bm h
   - \mathbb{E}_{F_n}[\Delta_0(E) - \delta] \\
&= \mu_n(\bm E_j)\,B_1^{-1}\bm h
   + \delta - \tilde\delta_{jk}
= A.
\end{align*}
By the reproducing property of $\mathcal{H}_{k_\Delta}$
(Lemma~3.9 of \citealp{kanagawa_gaussian_2018}),
\[
    |b_h(e)| \leq \|h\|_{\mathcal{H}_{k_\Delta}}\,
    \bigl\| k_\Delta(\cdot, e)
    - \textstyle\sum_i w_i(e)\,
    k_\Delta(\cdot, E_{j,i})
    \bigr\|_{\mathcal{H}_{k_\Delta}}.
\]
Expanding the RKHS norm squared with
$\bm v(e) = k_\Delta(\bm E_j, e)\in\mathbb R^{n_j}$ gives
\begin{align*}
    \bigl\| k_\Delta(\cdot, e)
    - \sum_i w_i(e)\, k_\Delta(\cdot, E_{j,i})
    \bigr\|^2_{\mathcal{H}_{k_\Delta}}
    &= k_\Delta(e,e)
       - 2\,k_\Delta(e,\bm E_j)\,B_1^{-1}\,k_\Delta(\bm E_j,e) \\
    &\quad+ k_\Delta(e,\bm E_j)\,B_1^{-1}\,k_{\Delta,jj}\,
       B_1^{-1}\,k_\Delta(\bm E_j,e)\\
    &= \bar{k}_\Delta(e,e)
       - \bm v(e)^T B_1^{-1}
       \bigl(B_1-k_{\Delta,jj}\bigr) B_1^{-1}\, \bm v(e)\\
    &= \bar{k}_\Delta(e,e)
       - \bm v(e)^T B_1^{-1}
       \bigl(\sigma_j^2 \bm I_{n_j} + F_{\textup{Sch}}\bigr)
       B_1^{-1}\, \bm v(e)\\
    &\leq \bar{k}_\Delta(e,e),
\end{align*}
where we used
$B_1 - k_{\Delta,jj} = \sigma_j^2 \bm I_{n_j} + F_{\textup{Sch}} \succeq 0$.
Therefore, $b_h(e)^2 \leq \|h\|_{\mathcal{H}_{k_\Delta}}^2\,
\bar{k}_\Delta(e,e)$ for every $e$. Since $A = \mathbb{E}_{F_n}[b_h(E)]$,
Jensen's inequality gives
\[
    A^2 \leq \mathbb{E}_{F_n}[b_h(E)^2]
    \leq \|\Delta_0 - \delta\|_{\mathcal{H}_{k_\Delta}}^2\,
    \mathbb{E}_{F_n}[\bar{k}_\Delta(E,E)].
\]

\textbf{Baseline term.} By the reproducing property of
$\mathcal{H}_{k_f}$, for each $i$,
$g_i = \langle f_0 - m,\, r_i \rangle_{\mathcal{H}_{k_f}}$,
where
\[
    r_i(\cdot) = k_f(\cdot, E_{j,i})
    - \sum_{l=1}^{n_k} c_{il}\, k_f(\cdot, E_{k,l}), \quad
    c_{il} = \bigl[k_f(\bm E_j, \bm E_k)
    (k_{f,kk} + \sigma_k^2 \bm I_{n_k})^{-1}\bigr]_{il}.
\]
Letting $\bm a = B_1^{-1}\, \mu_n(\bm E_j)^T \in \mathbb{R}^{n_j}$, the
baseline component becomes
\[
    B = \mu_n(\bm E_j)\, B_1^{-1}\, \bm g
    = \sum_i a_i\, g_i
    = \bigl\langle f_0 - m,\;
    \textstyle\sum_i a_i\, r_i \bigr\rangle_{\mathcal{H}_{k_f}}.
\]
By the Cauchy--Schwarz inequality in $\mathcal{H}_{k_f}$,
\[
    B^2 \leq \|f_0 - m\|_{\mathcal{H}_{k_f}}^2 \cdot \bm a^T G_f\, \bm a,
\]
where $[G_f]_{ii'} = \langle r_i, r_{i'} \rangle_{\mathcal{H}_{k_f}}$.
Expanding via the reproducing property yields, for any $i,i'$
\begin{align*}
[G_f]_{ii'}
&=\left\langle
k_f(\cdot,E_{j,i})-\sum_{l}c_{il}k_f(\cdot,E_{k,l}),
k_f(\cdot,E_{j,i'})-\sum_{l'}c_{i'l'}k_f(\cdot,E_{k,l'})
\right\rangle_{\mathcal H_{k_f}} \\
&=
\left\langle k_f(\cdot,E_{j,i}),k_f(\cdot,E_{j,i'})\right\rangle
-\sum_{l'} c_{i'l'} \left\langle k_f(\cdot,E_{j,i}),k_f(\cdot,E_{k,l'})\right\rangle \\
&\quad
-\sum_l c_{il}\left\langle k_f(\cdot,E_{k,l}),k_f(\cdot,E_{j,i'})\right\rangle
+\sum_{l,l'} c_{il}c_{i'l'}\left\langle k_f(\cdot,E_{k,l}),k_f(\cdot,E_{k,l'})\right\rangle\\
&=k_f(E_{j,i},E_{j,i'})
-\sum_{l'} c_{i'l'}k_f(E_{j,i},E_{k,l'})-\sum_l c_{il}k_f(E_{k,l},E_{j,i'})
+\sum_{l,l'} c_{il}c_{i'l'}k_f(E_{k,l},E_{k,l'}).
\end{align*}
Collecting these entries across all $i,i'$, we get
\begin{equation*}
    G_f=k_{f,jj}-2k_{f,jk}(k_{f,kk}+\sigma_k^2\bm I_{n_k})^{-1}k_{f,kj}+k_{f,jk}(k_{f,kk}+\sigma_k^2\bm I_{n_k})^{-1}k_{f,kk}(k_{f,kk}+\sigma_k^2\bm I_{n_k})^{-1}k_{f,kj}.
\end{equation*}
Letting
$P_k := (k_{f,kk} + \sigma_k^2 \bm I_{n_k})^{-1}$ gives
\begin{align*}
    G_f
    &= k_{f,jj} - 2\,k_{f,jk}\, P_k\, k_{f,kj}
       + k_{f,jk}\, P_k\, k_{f,kk}\, P_k\, k_{f,kj} \\
    &= F_{\textup{Sch}} - \sigma_k^2\, k_{f,jk}\, P_k^2\, k_{f,kj},
\end{align*}
using $\bm I_{n_k} - k_{f,kk}\, P_k = \sigma_k^2\, P_k$. 

As a Gram matrix,
$G_f \succeq 0$. Since
$\sigma_k^2\, k_{f,jk}\, P_k^2\, k_{f,kj} \succeq 0$, we have
$G_f \preceq F_{\textup{Sch}}$.

Combining both terms via
$|\mathbb{E}[\hat\psi \mid \bm E_j, \bm E_k, (E_i)_{i\in\mathcal{I}_{jk}}] - \tilde\delta_{jk}|
\leq |A| + |B|$ with
$\bm a^T G_f\, \bm a
= \mu_n(\bm E_j)\, B_1^{-1}\, G_f\, B_1^{-1}\, \mu_n(\bm E_j)^T$
yields~\eqref{eq: bias bound thm}.

\subsection{Proof of Corollary~\ref{cor:noninflating_bias_bound}}
Let $\mu := \mu_n(\bm E_j)^T \in \mathbb{R}^{n_j}$ and
$Q = \mu^T B_1^{-1} G_f B_1^{-1} \mu$. Since $G_f\succeq 0$, the matrix
$\mc M := B_1^{-1} G_f B_1^{-1}$ is positive semidefinite. Noting that
$\mu = \mathbb{E}_{F_n}[\bm v(E)]$ where
$\bm v(e) = k_\Delta(\bm E_j, e) \in \mathbb{R}^{n_j}$, and that
$x \mapsto x^T \mc M x$ is convex, Jensen's inequality under $F_n$ gives
\[
Q = (\mathbb E_{F_n}[\bm v(E)])^T \mc M (\mathbb E_{F_n}[\bm v(E)])
\leq \mathbb{E}_{F_n}\left[\bm v(E)^T \mc M \bm v(E)\right]
\leq \mathbb E_{F_n}\left[\bm v(E)^T B_1^{-1} F_{\textup{Sch}} B_1^{-1}
\bm v(E)\right],
\]
where the second inequality uses $G_f \preceq F_{\textup{Sch}}$. From the
proof of the treatment effect bound,
$\bar{k}_\Delta(e,e) - \bm v(e)^T B_1^{-1}
(\sigma_j^2 \bm I_{n_j} + F_{\textup{Sch}}) B_1^{-1} \bm v(e) \geq 0$ for
every $e$, so in particular
\[
\bm v(e)^T B_1^{-1} F_{\textup{Sch}} B_1^{-1} \bm v(e)
\leq \bar k_\Delta(e,e).
\]
Therefore,
$Q \leq \mathbb E_{F_n}[\bar k_\Delta(E,E)]$, which gives
\[
|B| \leq \|f_0-m\|_{\mathcal{H}_{k_f}}\,
\left[\mathbb E_{F_n}\{\bar k_\Delta(E,E)\}\right]^{1/2}.
\]
Combining with the treatment effect bound
$|A| \leq \|\Delta_0 - \delta\|_{\mathcal{H}_{k_\Delta}}\,
[\mathbb{E}_{F_n}\{\bar{k}_\Delta(E,E)\}]^{1/2}$ via the triangle inequality
yields~\eqref{eq: bias bound cor}. The right-hand side does not increase when
nonconcurrent controls are incorporated because
$\mathbb{E}_{F_n}\{\bar{k}_\Delta(E,E)\}
= n_{jk}^{-1}\sum_{i\in\mathcal{I}_{jk}} \bar{k}_\Delta(E_i,E_i)$ and
each $\bar{k}_\Delta(E_i,E_i)$ does not increase by
Theorem~\ref{theorem: variance reduction}(ii).
\end{document}